\newtheorem{lemma}{Lemma}
\newtheorem{theorem}{Theorem}
\newtheorem{definition}{Definition}
\newcommand{\bs}{\boldsymbol}
\newcommand{\A}{\mathcal{A}}
\newcommand{\C}{\mathbb{C}}
\newcommand{\E}{\mathbb{E}}
\renewcommand{\L}{\mathcal{L}}
\newcommand{\Leb}{\mathrm{Leb}}
\newcommand{\N}{\mathcal{N}}
\newcommand{\bP}{\mathbb{P}}
\newcommand{\PF}{\mathrm{PF}}
\newcommand{\R}{\mathbb{R}}
\newcommand{\V}{\mathcal{V}}
\newcommand{\Tr}{\mathop{\mathrm{Tr}}}
\newcommand{\Var}{\mathrm{\mathop{Var}}}
\newcommand{\W}{\mathcal{W}}
\newcommand{\w}{\boldsymbol{w}}
\newcommand{\x}{\boldsymbol{x}}
\newcommand{\y}{\boldsymbol{y}}
\newcommand{\diag}{\textrm{diag}}
\def\rmd{\mathrm{d}}
\def\rmi{\mathrm{i}}
\def\p{\alpha}\def\1{\mathbbm{1}}
\def\cp{\mathop{\stackrel{\mathbbm{P}}{\longrightarrow}}}
\begin{document}

\title{Neyman-Pearson Detection of a\\ Gaussian Source 
using Dumb Wireless Sensors}

\author{Pascal
  Bianchi,~\IEEEmembership{Member,~IEEE,} J\'er\'emie Jakubowicz,~\IEEEmembership{Member,~IEEE,} \\ and Fran\c cois Roueff
  \thanks{The authors are with Institut Telecom / Telecom ParisTech /
    CNRS LTCI, France.} 
  \thanks{e-mails:
    \texttt{\{bianchi,jakubowi,roueff\}@telecom-paristech.fr}}
}
\date{}

\markboth{submitted to \emph{IEEE Transactions on Signal Processing}}
{Bianchi\emph{et.al.}}

\maketitle

\begin{abstract} 
  We investigate the performance of the Neyman-Pearson detection of a
  stationary Gaussian process in noise, using a large wireless sensor network
  (WSN).  In our model, each sensor compresses its observation sequence using a
  linear precoder. The final decision is taken by a fusion center (FC) based on
  the compressed information.  Two families of precoders are studied: random
  iid precoders and orthogonal precoders.  We analyse their performance in
  the regime where both the number of sensors $k$ and the number of samples $n$
  per sensor tend to infinity at the same rate, that is, $k/n\to c \in (0,1)$.
  Contributions are as follows. {\bf 1)} Using results of random matrix theory
  and on large Toeplitz matrices, it is proved that the miss probability of the
  Neyman-Pearson detector converges exponentially to zero, when the above
  families of precoders are used. Closed form expressions of the corresponding
  error exponents are provided.  {\bf 2)} In particular, we propose a practical
  orthogonal precoding strategy, the Principal Frequencies Strategy (PFS),
  which achieves the best error exponent among all orthogonal strategies, and
  which requires very few signaling overhead between the central processor and
  the nodes of the network.  {\bf 3)} Moreover, when the PFS is used, a
  simplified low-complexity testing procedure can be implemented at the FC. We
  show that the proposed suboptimal test enjoys the same error exponent as the
  Neyman-Pearson test, which indicates a similar asymptotic behaviour of the
  performance. We illustrate our findings by numerical experiments on some
  examples. 
\end{abstract}
	
\section{Introduction}

The design of powerful tests allowing to detect the presence of a
stochastic signal using large WSN's is a crucial issue in a wide range
of applications. We investigate the Neyman-Pearson detection of a
Gaussian signal using a wireless network of $k$ sensors. Each sensor
observes a finite sample of the signal of interest, corrupted by
additive noise, and then forwards some information towards the FC
which takes the final decision. Neyman-Pearson detection of Gaussian
signals using large sensor networks has been thoroughly investigated
in the literature (see for instance \cite{STP06,hacIT09} and
references therein). In such works, the FC is assumed to have a perfect
knowledge of the observation sequence of each sensor. Unfortunately,
in a WSN, the amount of information forwarded by each sensor node to
the FC is usually limited, due to channel capacity constraints. Thus,
in practice, each sensor node must compress its information in some
way before transmission to the FC. This compression step of course
degrades the performance of the detection. A large number of works has
been devoted to the determination of relevant compression strategies,
essentially within the framework of distributed detection
\cite{blu97,xia06}.  In these works, the data is locally processed by
each sensor: Typically, a local Neyman-Pearson test is made by each
node, based on the knowledge of the probabilistic law of the source to
be detected.  Unfortunately, such approaches require at the same time
that each sensor possesses a significant computational ability
allowing involved processing of its data, and that each sensor has a
full knowledge of the source statistics.  On the opposite, this paper
investigates the case of {\sl dumb} WSN. By this term, we refer to the
case where:
\begin{itemize}
\item Individual sensor nodes are not aware of their mission and their
  environment.  They process the observed data with no or few instructions from
  the central processor.
\item The processing abilities of each sensor node are limited due either to
  hardware or energy constraints.
\end{itemize}
Dumb WSN are of practical interest because they are simple, flexible
(\emph{i.e.}, easily reconfigurable as a function of the sensor network's
mission) and avoid an excess of signaling overhead in the network.

The aim of this paper is to propose and to study different compressing
strategies which satisfy the above constraints 
and which are attractive in terms of detection performance.  The paper
is organized as follows.  Section~\ref{sec:model} introduces the
signal model. Each sensor is assumed to observe $n$ noisy samples of a
stationary (correlated) Gaussian source.  The spectral density $f$ of
the source is known at the FC but is unknown at the sensor nodes.  The
aim is to detect the presence of the source.  To that end, each node
forwards a compressed version of its observed sequence to the FC.  In
our model, the latter compression is achieved through simple (linear)
processing of the data, allowing this way for low cost implementation.
We refer to this step as \emph{linear precoding}.
Section~\ref{sec:LRT} introduces the problem of the detection of the
presence of the source (hypothesis $H_1$) versus the hypothesis that
only thermal noise is observed (hypothesis $H_0$). It is well known
that a uniformly most powerful (UMP) test is obtained by the celebrated
Neyman-Pearson procedure.  The corresponding test is derived in
Subsection~\ref{sec:exprLRT}.  Intuitively, the good detection
performance of the Neyman-Pearson test fundamentally relies on the relevant
selection of the linear precoders used at the sensor nodes.  Useful
families of linear precoders are introduced, namely random iid
precoders and orthogonal precoders.  The detection performance
associated with each of these families is studied in the 
asymptotic regime where both the number $k$ of sensors and the number
$n$ of observations per sensor tend to infinity at the same rate
($k,n\to\infty$, $k/n\to c$ where $c\in (0,1)$).  More precisely, we
show in Section~\ref{sec:errexp} that for any fixed $\alpha\in
(0,1)$, the miss probability of the NP test of level $\alpha$
converges exponentially to zero.  Error exponents are characterized
and compared for the precoding strategies of interest.  In particular, it is
proved that the so-called Principal Frequencies Strategy (PFS)
achieves the best error
exponent among all orthogonal strategies. Numerical computations of all the obtained error
exponents on some examples conclude this section.
In the case where PFS is
used, a suboptimal (non UMP) test is proposed in Section~\ref{sec:approxPFS}.  Based on the proof of a Large Deviation
Principle governing the proposed test statistics, it is shown that our
suboptimal test achieves the same error exponent as the Neyman-Pearson
test. Finally, Section~\ref{sec:simus} is devoted to the simulations.

\subsection*{Notations}

Column vectors are represented by bold symbols.  Notation $\|\bs y\|$ denotes
the Euclidean norm of vector $\bs y$.  
We denote by $\Leb$ the Lebesgue measure restricted to $[-\pi,\pi]$. For any function
$f:[-\pi,\pi]\to\R$, we use notation
$f^{-1}(A)=\{\omega\in[-\pi,\pi]\,:\,f(\omega)\in A\}$ for the inverse image of $A$, and we
denote by $\Leb \circ f^{-1}$ the image measure of $\Leb$ by $f$, \emph{i.e.}
which composes $\Leb$ with $f^{-1}$. 
For any square matrix $M$, $\rho(M)$ denotes its spectral radius.
Finally, $I_k$ denotes the $k× k$ identity matrix.
\section{The Framework}
\label{sec:model}

\subsection{Observation model at the sensor nodes}

Consider a set of $k$ sensors whose aim is to detect the presence of a certain
source signal $x(0), x(1), x(2)\dots$. Each sensor $i=1\dots k$ collects $n$
noisy samples of the source signal. We assume that $n\geq k$. Denote by $\y_i =
[y_i(0),\dots, y_i(n-1)]^T$ the $n × 1$ data vector observed by sensor $i$. For
each $i=1,\dots,k$, we consider the following signal model:
\begin{equation}
  \label{eq:model}
  \y_i = \x + \w_i,
\end{equation}
where $\x=[x(0),\dots, x(n-1)]^T$ contains the time samples extracted from a
zero mean stationary Gaussian process $x$ with known spectral density function
$f(\omega)$, $\omega\in [-\pi,\pi)$.  Vector $\w_i=[w_i(0),\dots, w_i(n-1)]^T$
is a zero mean white Gaussian process which stands for the thermal noise of
sensor~$i$. We denote by $\sigma^2$ the variance of $w_i(0)$ which is assumed to
be the same for all $i$. Random vectors $\x$, $\w_1,\dots,\w_k$ are supposed to
be independent. 
In the usual framework of Gaussian source detection, the aim is to detect
whether the signal $x$ of interest is present.  Formally, this reduces to the
following hypothesis testing problem:
\begin{eqnarray*}
 	H_1: & & \y_i = \x  + \w_i,\;\;\forall i =1\dots k\\
 	H_0: & & \y_i = \w_i,\;\; \forall i =1\dots k\ .
\end{eqnarray*}
In this paper, we make the following technical assumptions on
the spectral density $f$:
\begin{description}
\item[{\bf A1.}] {\sl The spectral density $f$ is continuous on $[-\pi,\pi]$.}


\item[{\bf A2.}] {\sl Measure $\Leb \circ f^{-1}$ does not put mass on points.}
\end{description}
Assumption {\bf A2} says that $f$ cannot be constant over a set of positive Lebesgue
measure (say, an interval of positive length). This e.g. rules out a white noise
for $x$. On the other hand any ARMA process $x$ that is not a white noise satisfies
Assumptions {\bf A1} and {\bf A2}.

\subsection{Assumptions and constraints on the network}

We assume that the decision is taken by a distant node (the fusion
center).  The latter is supposed to have a perfect knowledge of the noise
variance $\sigma^2$ and of the spectral density $f$ of the signal $x$ to be
detected. In this paper, we are interested in WSN satisfying the following
constraints.  \smallskip

\subsubsection{Communication constraint}
In an ideal WSN architecture, each sensor $i=1,\dots,k$ would transmit all
available observations $y_i(0),\dots,y_i(n-1)$ to the FC.  Unfortunately,
perfect forwarding of the whole information sequence $\bs y_i$ by each sensor
$i$ is impractical in a large number of situations, the amount of information
transmitted by each sensor node to the fusion center being usually limited.  In
this paper, we consider the case where only a compressed version of $\y_i$ is
likely to be forwarded.  More precisely, we assume that each sensor $i$ forwards
a single scalar $z_i$ to the fusion center, where $z_i$ is a certain mapping of
the sequence $\bs y_i$ received by sensor $i$.

\subsubsection{Signaling overhead constraint}
Depending on the particular mission of the network or on the particular spectral
density $f$ to be detected, the network should be easily reconfigurable using a
limited number of feedback bits from the fusion center to the sensors. In the
sequel we assume that the spectral density $f$ is known at the fusion center but
is unknown (or at most partially known) at the sensor nodes. \smallskip

\subsubsection{Complexity constraint}
Only low complexity data processing is likely to be implemented at the sensors'
side. More precisely, we assume that each sensor node $i=1\dots k$ forwards a
linear combination
\begin{equation}
\label{eq:zi}
z_i=\bs a_i^T \y_i 
\end{equation}
of its observation sequence $\y_i$ to the fusion center, where $\bs a_i$ is a
$n× 1$ vector to be determined.  
Figure~\ref{fig:synopt} provides an illustration of the
sensing scheme.
\begin{figure}[t]
  \centering
  \includegraphics[width=0.8\linewidth]{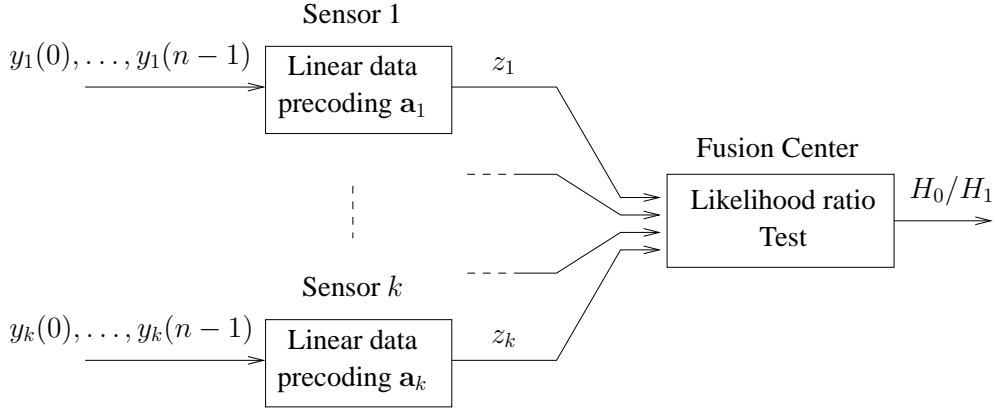}
  \caption{Sensor network using linear precoding at the nodes.}
  \label{fig:synopt}
\end{figure}
Such a set of vectors $\bs a_1,\dots,\bs
a_k$ will be refered to as a \emph{linear precoder}.  The $n× k$ matrix
${A}_n=[\bs a_1,\dots, \bs a_k]$ will be refered to as the \emph{precoding
  matrix}.


\section{Likelihood Ratio Test}
\label{sec:LRT}

\subsection{Expression of the Likelihood Ratio}
\label{sec:exprLRT}

We denote by $\bP_0$ and $\bP_1$ the probability under $H_0$ and $H_1$ and by
$\E_0$ and $\E_1$ the corresponding expectations. Denote by $\bs z =
[z_1,\dots,z_k]^T$ the available $k× 1$ observation vector at the fusion center,
where for each $i$, $z_i$ is defined by~(\ref{eq:zi}). We denote by
$p_{0}:\R^k\to\R_+$ and $p_{1}:\R^k\to\R_+$ the joint probability density
function of $z_1,\dots,z_k$ under hypotheses $H_0$ and $H_1$ respectively.  Due
to the celebrated Neyman-Pearson's Lemma, the Likelihood Ratio Test (LRT) is
uniformly most powerful.  The LRT rejects the null hypothesis for large values
of the log-likelihood ratio (LLR) defined by:
\begin{equation}
\label{eq:npfc}
  \L_{A_n} = \log \frac{p_{1}(\bs z)}{p_{0}(\bs z)} \ .
\end{equation}
In the above definition, the lowerscript $A_n$ has been introduced to recall
that the distribution of the random variable $\L_{A_n}$ depends on the
particular choice of the precoding matrix ${A}_n=[\bs a_1,\dots, \bs a_k]$.  We
now derive a closed form expression of the LLR $\L_{A_n}$.  It is worth noting
that multiplying each $a_i$ by a non-zero constant does not modify the
performance of the likelihood ratio test. Hence we may normalize $A_n$ so that $\|\bs a_i\|=1$
for each~$i$ in the following. In this case, $\bs z$ is a zero
mean Gaussian random vector with covariance matrix
$$
\E_1(\bs z\bs z^T) = {A}_n^T{\Gamma}_n{A}_n + \sigma^2 {I}_k
$$
under hypothesis $H_1$, where $\Gamma_n=\E_1(\bs x\bs x^T)$ represents the $n×
n$ covariance matrix of vector $\bs x$.  Matrix $\Gamma_n$ is the $n× n$
Toeplitz matrix associated to the spectral density $f$ of process $x$, namely,
\begin{equation}\label{eq:gammafctf}
  \Gamma_n = T_n(f) = \left[\frac 1{2\pi}\int_{-\pi}^\pi f(\omega)e^{\rmi\omega(k-l)}\rmd\omega\right]_{1\leq k,l\leq n}\ .
\end{equation}
Under $H_0$, the covariance matrix of vector $\bs z$ simply coincides with
$\E_0(\bs z\bs z^T) = \sigma^2 {I}_k$. Using these remarks, it is
straightforward to show that 
\begin{equation}\label{eq:exprLLR}
  2 \,\L_{A_n} = k \log \sigma^2 +\frac{\| \bs z\|^2}{\sigma^2}-\log\det(A_n^T\Gamma_nA_n+\sigma^2I_k)
  -\bs z^T(A_n^T\Gamma_n A_n+\sigma^2I_k)^{-1}\bs z\:.
\end{equation}
In the sequel, we assume as usual that the threshold of the test, say
$\gamma_n$, is fixed in such a way that the probability of false alarm (PFA)
does not exceed a level $\alpha$ ($0<\alpha<1$), which reads
\begin{equation}
  \label{eq:pfa}
  \bP_0(\L_{A_n} > \gamma_n) \leq \alpha\:.
\end{equation}
We now analyze the miss probability of the above LLR test as a function of
$A_n$.

\subsection{Introduction to error exponents}

Let $P_M(\p;A_n)$ denote the miss probability of the LLR test with level $\p$
based on the observation $z_1,\dots,z_k$:
$$ 
P_M^n(\p;A_n)= \inf\; \bP_1(\L_{A_n} \leq \gamma_n)\;,
$$  
where the inf is taken over all threshold values $\gamma_n$ verifying the PFA
constraint~(\ref{eq:pfa}). The miss probability is generally the key metric to
characterize the performance of hypothesis tests. Unfortunately, an exact
expression of the miss probability as a function of $A_n$ is difficult to obtain
in the general case.  Following~\cite{Chen96}, we thus analyze the asymptotic
behaviour of the miss probability as the number of available observations tends
to infinity. More precisely, we study the asymptotic regime where both the
number of sensors $k$ and the number of observations $n$ per sensor tend to
infinity at the same rate:
\begin{equation}
  \label{eq:growth-k-assump}
  n\to\infty,\;\; k\to\infty,\;\; \frac{k}{n}\to c
\end{equation}
where $c\in(0,1)$. Any sequence of $n× k$ precoding matrices $\A=(A_n)_{n\geq
  0}$ will be refered to as a \emph{linear strategy}.  Loosely speaking, we will
prove that, at least for certain linear strategies of interest, the miss
probability behaves as
$$
P_M^n(\p;A_n) \simeq e^{-n K_{\p}(\A)}
$$
in the asymptotic regime~(\ref{eq:growth-k-assump}), where $K_{\p}(\A)$ is a
certain constant which depends on the linear strategy but, as a matter of fact,
does not depend on the level $\p$.  Such a constant is called the \emph{error
  exponent}. It is a key indicator of the way the power of the test is
influenced by the chosen linear strategy.  More formally, we define for each
$\A$, 
\begin{eqnarray}
  \underline K_{\p}(\A) & = & \liminf_{k\to\infty}-\frac1n\log P_M^n(\p;A_n)\;,
  \label{eq:expinf}\\
  \overline K_{\p}(\A) & = & \limsup_{k\to\infty}-\frac1n\log P_M^n(\p;A_n)\;,
  \label{eq:expsuf}
\end{eqnarray}
and we define the \emph{error exponent} of $\A$ as $K_{\p}(\A)=\underline
K_{\p}(\A)=\overline K_{\p}(\A)$ as soon as~(\ref{eq:expinf})
and~(\ref{eq:expsuf}) coincide.  In the sequel, our aim is therefore to
determine linear precoding strategies $\A$ having a large error exponent
$K_{\p}(\A)$ (and for which $K_{\p}(\A)$ is well-defined, of course). The
following Lemma (see \cite{Chen96}) provides a practical way to evaluate error
exponents.
\begin{lemma}[\cite{Chen96}]\label{lem:Chen}
  The following inequalities hold:
  \begin{eqnarray*}
    \underline K_{\p}(\A) &\geq&\sup\left\{t: \liminf_{n\to\infty}\bP_0\left[\frac1n\log\frac{p_0({\bs z})}{p_1({\bs z})}\leq
    t\right]<\p\right\}  \\
    \overline K_{\p}(\A) &\leq& \sup\left\{t:\limsup_{n\to\infty}\bP_0\left[\frac1n\log\frac{p_0({\bs z})}{p_1({\bs z})} \leq t\right]\leq\p\right\}\ .
  \end{eqnarray*}
  In particular if, under hypothesis $H_0$, $-n^{-1}\L_{A_n}$ converges in
  probability to a deterministic constant $\xi$, then $\overline K_{\p}(\A)=
  \underline K_{\p}(\A)= K_{\p}(\A)=\xi$ is necessary equal to this limit.
\end{lemma}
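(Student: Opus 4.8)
The plan is to establish the two inequalities on $\underline K_\p(\A)$ and $\overline K_\p(\A)$ by a direct Neyman--Pearson argument, and then to deduce the ``in particular'' assertion. First I would fix a level $\p\in(0,1)$ and recall that, by the Neyman--Pearson lemma, for each $n$ the optimal threshold test of level $\p$ rejects $H_0$ when $\L_{A_n}>\gamma_n$, where $\gamma_n$ is the smallest value for which $\bP_0(\L_{A_n}>\gamma_n)\le\p$; its miss probability is exactly $P_M^n(\p;A_n)=\bP_1(\L_{A_n}\le\gamma_n)$. I would then use the standard change-of-measure (tilting) identity: for any Borel set $B$,
\[
\bP_1(\{\tfrac1n\log\tfrac{p_0(\bs z)}{p_1(\bs z)}\in B\})=\E_0\!\left[\frac{p_1(\bs z)}{p_0(\bs z)}\,\1_{\{\frac1n\log\frac{p_0}{p_1}\in B\}}\right]=\E_0\!\left[e^{-n\cdot\frac1n\log\frac{p_0}{p_1}}\,\1_{\{\frac1n\log\frac{p_0}{p_1}\in B\}}\right].
\]
Writing $L_n:=\tfrac1n\log\frac{p_0(\bs z)}{p_1(\bs z)}=-\tfrac1n\L_{A_n}$, this gives $\bP_1(L_n\in B)=\E_0[e^{-nL_n}\1_{\{L_n\in B\}}]$, the key bridge between the $H_1$-probability (miss event) and $H_0$-expectations.

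For the lower bound on $\underline K_\p$: take any $t$ with $\liminf_n\bP_0(L_n\le t)<\p$. Along a subsequence, $\bP_0(L_n\le t)<\p$, so the optimal level-$\p$ threshold satisfies $-\gamma_n/n\ge t$ (equivalently the test rejects whenever $L_n<-\gamma_n/n\le -t$... more precisely the acceptance-of-$H_0$ region for the miss event is $\{L_n\ge -\gamma_n/n\}$ with $-\gamma_n/n \le t$), hence the miss set is contained in $\{L_n\ge t\}$ up to the subsequence bookkeeping; then
\[
P_M^n(\p;A_n)=\bP_1(L_n\ge -\gamma_n/n)\le \bP_1(L_n\ge t)=\E_0[e^{-nL_n}\1_{\{L_n\ge t\}}]\le e^{-nt}.
\]
Taking $-\tfrac1n\log$ and then the liminf yields $\underline K_\p(\A)\ge t$; taking the supremum over admissible $t$ gives the first inequality. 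For the upper bound on $\overline K_\p$: take $t$ with $\limsup_n\bP_0(L_n\le t)\le\p$; here I would argue that for large $n$ the optimal threshold forces the miss region to contain $\{L_n\le t-\epsilon\}$ (a slightly smaller level set), and lower-bound $P_M^n(\p;A_n)=\E_0[e^{-nL_n}\1_{\text{miss}}]\ge e^{-n t}\,\bP_0(L_n\le t-\epsilon)$, so that $-\tfrac1n\log P_M^n\le t+o(1)$ along a subsequence; letting $\epsilon\downarrow0$ and optimizing over $t$ gives $\overline K_\p(\A)\le \sup\{t:\limsup_n\bP_0(L_n\le t)\le\p\}$.

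The ``in particular'' statement then follows immediately: if $-\tfrac1n\L_{A_n}=L_n\cp\xi$ under $H_0$, then for every $t<\xi$ we have $\bP_0(L_n\le t)\to0<\p$, so $t$ is admissible in the first supremum and $\underline K_\p(\A)\ge\xi$; and for every $t>\xi$ we have $\bP_0(L_n\le t)\to1>\p$, so the second supremum is $\le\xi$, giving $\overline K_\p(\A)\le\xi$. Combining, $\underline K_\p(\A)=\overline K_\p(\A)=\xi$.

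The main obstacle I anticipate is the careful handling of the threshold $\gamma_n$ and of the boundary of the level set $\{L_n\le t\}$: the optimal test may randomize on $\{\L_{A_n}=\gamma_n\}$, and $\bP_0(L_n\le t)$ need not be continuous in $t$, so one must insert the $\epsilon$-margins and pass to subsequences exactly as above rather than equating events naively. Since the precise form of this argument is already in \cite{Chen96}, I would reference it for the bookkeeping and emphasize only the tilting identity and the monotonicity of $\gamma_n$ as the substantive ingredients.
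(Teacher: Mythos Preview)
The paper does not prove this lemma; it simply cites \cite{Chen96}. Your change-of-measure (tilting) approach is the standard route underlying Chen's result, and your derivation of the ``in particular'' clause from the two displayed inequalities is correct --- that clause is in fact the only part of the lemma the paper ever invokes.

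There is, however, a genuine gap in your argument for the first displayed inequality. From $\liminf_n\bP_0(L_n\le t)<\p$ you extract a \emph{subsequence} along which $\bP_0(L_n\le t)<\p$, and along that subsequence your tilting bound gives $P_M^n\le e^{-nt}$. But this only yields $\limsup_n(-n^{-1}\log P_M^n)\ge t$, i.e.\ $\overline K_\p(\A)\ge t$; it does \emph{not} give $\underline K_\p(\A)\ge t$, since a $\liminf$ cannot be bounded below from information on a subsequence alone. Your upper-bound sketch is also garbled: the miss region is $\{L_n\ge -\gamma_n/n\}$, a right half-line in $L_n$, so it cannot ``contain $\{L_n\le t-\epsilon\}$'' as you write, and the claimed lower bound $P_M^n\ge e^{-nt}\bP_0(L_n\le t-\epsilon)$ does not follow. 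Fortunately none of this affects the ``in particular'' conclusion: when $L_n\cp\xi$ under $H_0$, for every $t\ne\xi$ the sequence $\bP_0(L_n\le t)$ converges (to $0$ or $1$), so its $\liminf$ and $\limsup$ coincide, your subsequence becomes the full sequence, and the sandwich $\xi\le\underline K_\p(\A)\le\overline K_\p(\A)\le\xi$ goes through cleanly.
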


According to the above lemma, the asymptotic performance analysis of the LLR
test reduces to the characterization of the limit in probability of the
normalized LLR as $n\to\infty$, as soon as this limit exists. Moreover, in this
case, the error exponent $K_{\p}$ is independent from level~$\p$.

\subsection{Some families of precoders}

A natural approach to design relevant precoders would be to characterize the
linear strategies $\A$ which maximize the limit in probability (if it exists)
of the LLR $\L_{A_n}$ as $n,k\to\infty$. Ideally, this would lead to the
strategies with maximal error exponent.  Unfortunately, such a characterization
is difficult and would moreover lead to linear strategies which would deeply
depend on the spectral density $f$ of the signal to be detected. The practical
implementation of such optimal linear strategies would typically require to
communicate the whole function $f$ to each sensor via a feedback link from the
central processor. In this paper, we focus on the opposite on the case of
``dumb'' sensors \emph{i.e.}, sensors which are able to process information
with few or no knowledge of their mission or their environment.  More
precisely, we separately study the following linear strategies.
\begin{enumerate}
\item {\sl Random iid precoders:} A natural way to design dumb sensor
  networks is to select each sensor's precoder at random, independently from the
  network's mission.  Motivated by first by the simplicity of the approach and
  second by its widespread use in compressive sensing
  applications~\cite{can05}, we assume that matrix $A_n$ is one realization of a
  $n× k$ random matrix with zero mean iid entries.  In the case of random
  iid precoders, sensors are able to precode their information without any
  instructions from the fusion center.

\item {\sl Orthogonal precoders:} In this case, matrix $A_n$ is such that
  $A_n^TA_n=I_k$ \emph{i.e.}, the precoders $\bs a_1,\dots, \bs a_k$ are
  orthogonal.  Orthogonal precoders will reveal useful for the design of dumb
  but nevertheless efficient sensor networks.  Indeed, under this constraint,
  we are able to exhibit strategies that achieve the best error exponent.  In
  addition, when such precoders are used, we will show that a low complexity
  testing procedure can be implemented as an alternative to the costly
  Likelihood Ratio Test, without decreasing the error exponent.
\end{enumerate}

\section{Error Exponents}
\label{sec:errexp}

\subsection{Case of random iid precoders}
\label{sec:iid}

Before stating the main result of this subsection, remark that the performance
of the test is of course expected to depend on the covariance matrix $\Gamma_n$
of the signal to be detected. In particular, it is useful to recall some well
known results on the behaviour of the eigenvalues of $\Gamma_n$.  From classical
results on large Toeplitz matrices \cite{GrSz}, it is known that $\Gamma_n$ can
be approximated by a circulant matrix with eigenvalues $f(0), f(\frac
{2\pi}n),\dots,f(\frac {2\pi(n-1)}n)$. More precisely, for any Hermitian $n× n$ matrix $Q$,
we denote by $F_Q(t)=\frac {\# \{i, \,\lambda_i(Q)\leq t\}}{n}$ the
distribution function associated with the empirical distribution of the
eigenvalues $\lambda_1(Q),\dots,\lambda_n(Q)$ of $Q$ (the corresponding
probability measure is often refered to as the \emph{spectral measure} of $Q$).
Szegö's Theorem (\cite{GrSz}, p.64) states that, provided that Assumption~$\bf
A1$ holds, $F_{\Gamma_n}$ converges weakly to the distribution function $\Phi$
defined by:
\begin{equation}
\Phi(t) =\frac1{2\pi}\Leb\circ f^{-1}\big((-\infty,t]\big)\ ,
\label{eq:cdf}
\end{equation}
where we recall that $\Leb\circ f^{-1}$ is the measure which composes the
Lebesgue measure $\Leb$ on $[-\pi,\pi]$ with  $f^{-1}$ (the inverse image under
$f$). The error exponent merely
depends on the latter limiting spectral measure $\Phi$, as stated by the
following Theorem.

\begin{theorem}
\label{the:iid}
Suppose that~(\ref{eq:growth-k-assump}) holds for some $c\in(0,1)$ and assume
$\bf A1$. For each $n$, let $A_n = (A_{ij}^n)$ be a $n× k$ real random matrix such
that $A_{ij}^n$ for all $n,i,j$ are iid zero mean random variables with
finite second order moment. Consider any fixed
level $\p\in(0,1)$.  Then the linear strategy ${\cal A}=(A_n)_n$ admits an error
exponent $K_\p({\cal A})=K_{\mathrm{rnd}}(c)$ given~by:
\begin{equation}
  \label{eq:erriid}
  K_{\mathrm{rnd}}(c) = -c+\sigma^2c \beta-\frac c2 \log(\sigma^2\beta)+\frac 1{2}\int \log(1+ct\beta)\rmd\Phi(t)  \ ,
\end{equation}
where $\beta$ is the unique solution to the following equation:
\begin{equation}
  \sigma^2 = \frac 1\beta -\int \frac t{1+ct\beta}\rmd\Phi(t)\ .
\label{eq:beta}
\end{equation}
\end{theorem}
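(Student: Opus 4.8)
The plan is to invoke Lemma~\ref{lem:Chen}: it suffices to prove that, under $\bP_0$, the quantity $-n^{-1}\L_{A_n}$ converges in probability to a deterministic limit, which will then be the error exponent $K_\p(\A)$ for every level $\p\in(0,1)$. Since rescaling each $\bs a_i$ does not alter the test, I normalize the columns of $A_n$ to unit norm; after this normalization the entries of $A_n$ are of order $n^{-1/2}$, so that $M_n:=A_n^T\Gamma_nA_n$ is a generalized sample covariance matrix with population covariance $\Gamma_n=T_n(f)$ and aspect ratio $k/n\to c$. Starting from the closed form~(\ref{eq:exprLLR}) I would write
$$-\frac2n\L_{A_n}=-\frac kn\log\sigma^2-\frac{\|\bs z\|^2}{n\sigma^2}+\frac1n\log\det(M_n+\sigma^2I_k)+\frac1n\bs z^T(M_n+\sigma^2I_k)^{-1}\bs z$$
and treat the four terms separately. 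The first converges to $-c\log\sigma^2$. Under $\bP_0$, conditionally on $A_n$ the vector $\bs z$ is $\N(0,\sigma^2 I_k)$ (the $\w_i$ being independent and the $\bs a_i$ of unit norm); in particular $\bs z$ is independent of $A_n$ and $\|\bs z\|^2/\sigma^2\sim\chi^2_k$, so the second term converges to $-c$ by the law of large numbers.

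For the last two terms I would use classical random matrix theory. By Szeg\H{o}'s theorem $F_{\Gamma_n}$ converges weakly to $\Phi$, and Assumption~$\bf A1$ ensures $\sup_n\rho(\Gamma_n)<\infty$; hence the Marchenko--Pastur/Silverstein theorem applies and $F_{M_n}$ converges weakly, in probability, to a deterministic compactly supported measure $\mu_c$ on $[0,\infty)$ whose Stieltjes transform $m(z)=\int(\lambda-z)^{-1}\rmd\mu_c(\lambda)$ is characterized by the fixed-point equation
$$\frac1{m(z)}+z=\int\frac{\tau}{1+c\tau m(z)}\,\rmd\Phi(\tau)\:.$$
For the fourth term, I would condition on $A_n$ and use concentration of the Gaussian quadratic form $\bs z^T(M_n+\sigma^2I_k)^{-1}\bs z$ around its conditional mean $\sigma^2\Tr(M_n+\sigma^2I_k)^{-1}$ (the conditional variance is $O(n^{-1})$ since $\|(M_n+\sigma^2I_k)^{-1}\|\le\sigma^{-2}$ and $\Tr(M_n+\sigma^2I_k)^{-1}=O(n)$); then $k^{-1}\Tr(M_n+\sigma^2I_k)^{-1}\to\int(\lambda+\sigma^2)^{-1}\rmd\mu_c(\lambda)=:\beta=m(-\sigma^2)$ follows from weak convergence of $F_{M_n}$ against the bounded continuous map $\lambda\mapsto(\lambda+\sigma^2)^{-1}$ on $[0,\infty)$, so the fourth term tends to $\sigma^2c\beta$. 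Evaluating the fixed-point equation at $z=-\sigma^2$ gives exactly~(\ref{eq:beta}), which has a unique positive solution; this identifies $\beta$.

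The third term is $\frac kn\cdot\frac1k\log\det(M_n+\sigma^2I_k)$, i.e.\ the log-determinant linear spectral statistic of $M_n$; since $\lambda\mapsto\log(\lambda+\sigma^2)$ is continuous but unbounded, weak convergence alone does not suffice, and I would invoke the (known) convergence in probability of $n^{-1}\log\det(M_n+\sigma^2I_k)$ to $c\int\log(\lambda+\sigma^2)\rmd\mu_c(\lambda)$ under only a finite second moment, obtained by the usual truncation argument. Collecting the four limits, Lemma~\ref{lem:Chen} yields $K_\p(\A)=-\tfrac c2\log\sigma^2-\tfrac c2+\tfrac c2\int\log(\lambda+\sigma^2)\rmd\mu_c(\lambda)+\tfrac{\sigma^2c\beta}2$. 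It remains to convert the $\mu_c$-integral into the $\Phi$-integral of~(\ref{eq:erriid}) via the identity
$$\int\log(\lambda+\sigma^2)\,\rmd\mu_c(\lambda)=-1+\sigma^2\beta-\log\beta+\frac1c\int\log(1+ct\beta)\,\rmd\Phi(t)\:,$$
which I would prove by differentiating both sides in $\sigma^2$: the left side has derivative $\beta$, and on the right the terms proportional to $\rmd\beta/\rmd\sigma^2$ collect into $\tfrac{\rmd\beta}{\rmd\sigma^2}\bigl(\sigma^2-\tfrac1\beta+\int\tfrac t{1+ct\beta}\rmd\Phi(t)\bigr)$, which vanishes by~(\ref{eq:beta}), leaving derivative $\beta$ as well; the (vanishing) additive constant is fixed by $\sigma^2\to\infty$, where $\beta\sim\sigma^{-2}$. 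Substituting this identity and simplifying gives $K_\p(\A)=K_{\mathrm{rnd}}(c)$ as in~(\ref{eq:erriid}), for every $\p\in(0,1)$.

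The main obstacle is the random-matrix input for the log-determinant: while the bulk convergence $F_{M_n}\Rightarrow\mu_c$ and the Silverstein equation are standard, the unboundedness of $\log$ near $0$ and at $+\infty$ forces one to control, respectively, the smallest eigenvalues (handled by the strict shift $\sigma^2>0$) and the largest ones (handled by truncation/concentration), the latter being delicate without a fourth-moment assumption. A secondary technical point is making the conditional-concentration argument for the quadratic form fully rigorous and checking, through the fixed-point equation, that the constant $\beta$ produced by $k^{-1}\Tr(M_n+\sigma^2I_k)^{-1}$ coincides with the unique solution of~(\ref{eq:beta}). The two elementary limits and the deterministic identity between the integrals are then routine.
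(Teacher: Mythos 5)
Your proposal is correct and follows essentially the same route as the paper: reduction via Lemma~\ref{lem:Chen} to convergence in probability of $-n^{-1}\L_{A_n}$ under $H_0$, the same four-term decomposition of the LLR, the Silverstein fixed-point equation for the limiting spectral measure of $A_n^T\Gamma_nA_n$ with $\beta$ identified as the Stieltjes transform at $-\sigma^2$, Gaussian quadratic-form concentration for the cross term, and the same differentiate-and-fix-the-constant-at-infinity verification of the closed-form log-determinant identity (the paper's auxiliary function $C(x)$). The one step you gloss over is that normalizing each column by its own norm destroys the iid structure, so the Marchenko--Pastur/Silverstein theorem does not apply verbatim to the column-normalized matrix; the paper devotes a lemma (based on Bai's L\'evy-distance inequality) to showing that the spectral measures of the column-normalized and $n^{-1/2}$-scaled versions are asymptotically indistinguishable, which is precisely the perturbation argument your sketch implicitly requires.
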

The proof is provided in Appendix~\ref{sec:proofiid}.

\subsection{Case of orthogonal precoders}
\label{sec:pfs}


We now focus on the case where $A_n^TA_n=I_k$. Our aim is first to prove that
among all orthogonal strategies, we may determine some that achieve the
maximum error exponent and second, to determine this maximum error exponent.
Results are provided below in Theorem~\ref{thm:optexp}. We first provide some
definitions along with some insights on the results.

Loosely speaking, it is easy to think of a relevant orthogonal strategy as
follows.  Focus on one given sensor $i=1\dots k$ for the sake of simplicity.
Under $H_0$, the received sequence $\bs y_i=\bs w_i$ corresponds to a white
Gaussian noise of variance $\sigma^2$.  Therefore the law of $z_i=\bs a_i^T\bs
y_i$ is ${\cal N}(0,\sigma^2)$. Under $H_1$, it is straightforward to show that
$z_i\sim {\cal N}(0,\bs a_i^T\Gamma_n\bs a_i+\sigma^2)$, where we recall that
$\Gamma_n=\E(\bs x\bs x^T)$ is the signal covariance matrix.  Clearly, the best
way for the sensor $i$ to discriminate $H_1$ versus $H_0$ is to chose the
precoder $\bs a_i$ which maximizes the variance $\bs a_i^T\Gamma_n\bs
a_i+\sigma^2$. This is achieved when $\bs a_i$ coincides with the eigenvector of
$\Gamma_n$ associated with the largest eigenvalue.  Generalizing this remark to
$k$ sensors, it is natural to introduce the strategy for which the $k$ precoders
$\bs a_1\dots\bs a_k$ coincide with the $k$ eigenvectors of $\Gamma_n$
associated with the largest eigenvalues. We shall refer to this strategy as the
{\sl Principal Component Strategy} (PCS).

\begin{definition}[principal components strategy
  (PCS)]\label{def:princ-comp-strat}
  Let $(\bs v_i^n)_{1\leq i\leq n}$ be the eigenvectors of $\Gamma_n$ and
  $(\lambda_i^n)_{1\leq i\leq n}$ be the corresponding eigenvalues, ordered in
  such a way that $\lambda_1\geq\lambda_2\dots\geq \lambda_n$. The principal
  component strategy $\V$ is defined as the sequence of $n× k$ matrices
  $V_n$ given by:
  $$
  V_n = \left[\bs v_1^n,\dots,\bs v_k^n\right]\ .
  $$
\end{definition}

As will be stated by Theorem~\ref{thm:optexp} below, PCS achieves the maximum
error exponent among all orthogonal strategies. Unfortunately, exact PCS might
be difficult to implement in a dumb sensor network, as each node needs to be
informed of a whole eigenvector of the covariance matrix $\Gamma_n$. This
requires involved cooperation between the nodes and the fusion center. In order
to reduce the amount of overhead in the network, we propose an alternative
strategy which turns out to achieve the same error exponent as PCS.  Let
$F_n=[F_n(i,j)]_{0\leq i,j\leq n-1}$ denote the $n× n$ real-valued orthogonal
Fourier basis matrix, that is $F_n=[\bs e_0^n,\dots,\bs e_{n-1}^n]$, where the
columns ${\bs e}_j^n$ are defined, up to a normalizing constant by  
$$
\left\{\begin{array}{lll}
 {\bs e}_0^n &\propto [1,\dots,1]^T &\\
 {\bs e}_j^n & \propto [\cos(2\pi ij / n)]_{i=0,\dots,n-1} & \text{for}\quad j=1,\dots,\lfloor n/2\rfloor\\
 {\bs e}_{n-j}^n &\propto [\sin(2\pi ij / n)]_{i=0,\dots,n-1} & \text{for}\quad j=1,\dots,\lfloor (n-1)/2\rfloor\ .
\end{array}\right.
$$
The main idea is to remark that for large $n$, the covariance matrix $\Gamma_n$
can be approximated by the matrix
\begin{equation}
  F_n\, \diag\left(f(0)\dots f(2\pi(n-1)/n)\right)\,F_n^T\ ,
\label{eq:diagFn}
\end{equation}
see~\cite{GrSz,widom} for more details. As a consequence, it seems reasonable to
propose a strategy inspired of PCS, only substituing the above
matrix~(\ref{eq:diagFn}) with the true covariance matrix $\Gamma_n$. This leads
to the following definition.

\begin{definition}[principal frequencies strategy (PFS)]\label{def:princ-freq-strat}
  For each $n$, denote by $(j_1^n,\dots,j_n^n)$ any permutation of $\{0,1,
  \dots, n-1\}$ such that $f(2\pi j_1^n/n)\geq \dots \geq f(2\pi j_n^n/n)$.  The
  principal frequencies strategy ${\W}$ is defined as the sequence of $n× k$
  matrices $W_n$ given by:
  \begin{equation}
    \label{eq:matricePFS}
    W_n = \left[\bs e_{j_1^n}^n,\dots,\bs e_{j_k^n}^n\right]\ ,
  \end{equation}
  where $\bs e_1^n,\dots,\bs e_n^n$ are the columns of matrix $F_n$.
\end{definition}
Note that PFS only requires to transmit one of the $k$ indices $j_1^n\dots
j_k^n$ corresponding to the principal frequencies of $f$ to each sensor.
In return, each sensor
$i$ computes the scalar product between the $j_i^n$th column of Fourier matrix
$F_n$ and its received sequence $\bs y_i$. In other words, it computes the value
of the (real) periodogram of $\bs y_i$ at frequency~$2\pi j_i^n/n$.  The
following result proves furthermore that both PCS and PFS achieve the best error
exponent among all orthogonal strategies.

For any $c\in(0,1)$ denote by $\Delta_c$ the following set of frequencies:
\begin{equation}
  \label{eq:Delta}
  \Delta_c = \left\{\omega\in(-\pi,\pi)\,:\,\Phi\circ f(\omega)\geq 1-c\right\}\:.
\end{equation}
It is worth noting that the Lebesgue measure of $\Delta_c$ is equal to $2\pi c$
(see Lemma~\ref{lem:DeltaC}).

\begin{theorem}\label{thm:optexp}
 Suppose that~(\ref{eq:growth-k-assump}) holds for some $c\in(0,1)$.
 and assume $\bf A1$ and $\bf A2$.  Let $\cal V$ and $\cal W$
  respectively denote the PCS and PFS as defined above.  For any $\p\in(0,1)$,
  the error exponents $K_\p(\V)$ and $K_\p(\W)$ associated with $\V$ and $\W$
  exist, and are such that $K_\p(\V)=K_\p(\W)=K_{\mathrm{orth}}(c)$ where
  \begin{equation}
    \label{eq:WF-err-exp}
    K_{\mathrm{orth}}(c) = \frac1{2\pi}\int_{\Delta_c}D\left(\N(0,\sigma^2)\,||\,\N(0,f(\omega)+\sigma^2)\right)\rmd\omega\;,
  \end{equation}
  where $D$ denotes the Kullback-Leibler contrast.
  Moreover, for any orthogonal strategy $\A$, 
   \begin{equation}
    \label{eq:WF-err-exp-LB}
  \overline K_{\p}(\A)\leq K_{\mathrm{orth}}(c)\ . 
\end{equation}
\end{theorem}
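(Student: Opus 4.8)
The strategy is to apply Lemma~\ref{lem:Chen}: for each of the strategies $\V$, $\W$ and for a generic orthogonal strategy $\A$, it suffices to compute the limit in probability under $H_0$ of $-n^{-1}\L_{A_n}$, or at least to bound it appropriately. Since $A_n^TA_n=I_k$, the matrix $A_n^T\Gamma_nA_n$ is $k\times k$ with eigenvalues $\mu_1^n\geq\dots\geq\mu_k^n$, and the closed-form expression~(\ref{eq:exprLLR}) becomes, after diagonalizing and using $\|\bs z\|^2=\sum_i \tilde z_i^2$ in the eigenbasis,
\begin{equation*}
-\frac{2}{n}\L_{A_n} = \frac1n\sum_{i=1}^k\left[\log\left(1+\frac{\mu_i^n}{\sigma^2}\right) - \frac{\mu_i^n}{\sigma^2(\mu_i^n+\sigma^2)}\,\tilde z_i^2\right]\;.
\end{equation*}
Under $H_0$ the $\tilde z_i$ are iid $\N(0,\sigma^2)$, so $\E_0[\tilde z_i^2]=\sigma^2$ and the random part concentrates; one shows the variance of the sum is $O(k/n^2)=o(1)$, hence $-n^{-1}\L_{A_n}$ has the same limit as its expectation, namely $\frac1{2n}\sum_{i=1}^k\left[\log(1+\mu_i^n/\sigma^2) - \mu_i^n/(\mu_i^n+\sigma^2)\right]$. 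Now recognize the summand: $\tfrac12[\log(1+t/\sigma^2)-t/(t+\sigma^2)] = D(\N(0,\sigma^2)\,\|\,\N(0,t+\sigma^2))$. Thus everything reduces to the asymptotics of the empirical distribution of the compressed eigenvalues $\{\mu_i^n\}_{i=1}^k$.

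For PFS this step is the cleanest: by construction $W_n^T\Gamma_nW_n$ is close to $\diag(f(2\pi j_i^n/n))_{i=1}^k$ because $F_n$ nearly diagonalizes $\Gamma_n$ in the sense of~(\ref{eq:diagFn}); more precisely one invokes the large-Toeplitz approximation (the error $F_n^T\Gamma_nF_n - \diag(f(2\pi j/n))$ vanishing in a suitable normalized Frobenius/weak sense, cf.~\cite{GrSz,widom}) to conclude that the empirical measure $\frac1k\sum_{i=1}^k\delta_{\mu_i^n}$ converges weakly to the law of $f(\omega)$ when $\omega$ is uniform on $\Delta_c$ — this uses Assumption~$\bf A2$ to ensure the ordering of the $f(2\pi j/n)$ selects, in the limit, exactly the super-level set $\{\Phi\circ f\geq 1-c\}$, and Lemma~\ref{lem:DeltaC} for $\Leb(\Delta_c)=2\pi c$. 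Passing to the limit in $\frac1{2n}\sum_i[\cdots] = \frac{k}{n}\cdot\frac1k\sum_i D(\N(0,\sigma^2)\|\N(0,\mu_i^n+\sigma^2))$ gives exactly $\frac{c}{2\pi}\cdot\frac1{2\pi c}\int_{\Delta_c}D(\cdots)\rmd\omega \cdot 2\pi$, i.e.~$K_{\mathrm{orth}}(c)$ as in~(\ref{eq:WF-err-exp}). (Uniform integrability of the summand is automatic since $f$ is bounded by $\bf A1$.) For PCS the same conclusion follows because $\mu_i^n=\lambda_i^n$, the top $k$ eigenvalues of $\Gamma_n$, whose empirical distribution converges to $\Phi$ restricted and renormalized to the top-$c$ mass — which is again the law of $f(\omega)$ for $\omega$ uniform on $\Delta_c$, by the definition of $\Phi$ and of $\Delta_c$.

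For the upper bound~(\ref{eq:WF-err-exp-LB}) over arbitrary orthogonal $\A$: by Lemma~\ref{lem:Chen} we need $\limsup_n[-n^{-1}\L_{A_n}]\leq K_{\mathrm{orth}}(c)$ in probability (or at least the corresponding $\limsup$ of the relevant tail probabilities), and by the same concentration argument this reduces to bounding $\limsup_n \frac{k}{n}\cdot\frac1k\sum_{i=1}^k g(\mu_i^n)$ where $g(t)=D(\N(0,\sigma^2)\|\N(0,t+\sigma^2))$ is increasing in $t$. The point is a Ky Fan / Cauchy interlacing type inequality: for any $A_n$ with $A_n^TA_n=I_k$ the ordered eigenvalues of $A_n^T\Gamma_nA_n$ are dominated by those of $\Gamma_n$, $\mu_i^n\leq\lambda_i^n$; since $g$ is monotone, $\frac1k\sum g(\mu_i^n)\leq\frac1k\sum g(\lambda_i^n)$, and the right-hand side converges to the PCS value. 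Hence no orthogonal strategy beats $K_{\mathrm{orth}}(c)$.

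The main obstacle will be the concentration/convergence step for PFS — establishing rigorously that the compressed spectrum $\frac1k\sum_i\delta_{\mu_i^n}$ converges to the claimed limit despite $W_n$ only approximately diagonalizing $\Gamma_n$. This requires controlling the off-diagonal mass of $F_n^T\Gamma_nF_n$ well enough to survive the selection of a vanishing fraction $k/n\to c$ of rows/columns and the subsequent passage through the nonlinear (but bounded, Lipschitz on compacts) function $g$; Assumption $\bf A2$ is exactly what prevents pathological ties in the ordering of $f(2\pi j/n)$ from distorting the limiting support $\Delta_c$. The trace/variance bound giving $o(1)$ fluctuations of $\L_{A_n}$ under $H_0$ is routine (a sum of $k$ centered, uniformly-bounded-variance terms divided by $n$), and the Ky Fan inequality for the upper bound is standard.
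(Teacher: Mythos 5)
Your proposal follows essentially the same route as the paper's proof: reduce via Lemma~\ref{lem:Chen} to the in-probability limit of $-n^{-1}\L_{A_n}$ under $H_0$, diagonalize to express it through the eigenvalues $\mu_i^n$ of $A_n^T\Gamma_nA_n$ and recognize the KL-divergence summand, obtain weak convergence of their empirical measure via Szeg\H{o}'s theorem for PCS and the circulant/Toeplitz approximation for PFS, and get the upper bound over all orthogonal strategies by Cauchy interlacing (the paper's Lemma~\ref{lem:HornJohnson}) combined with monotonicity of the summand. The paper is somewhat more careful on two points you leave at sketch level---it first fixes $k(n)$ by the level set $\{\Phi(\lambda_i^n)\ge 1-c\}$ and then extends to arbitrary $k/n\to c$ by monotonicity and continuity of $K_{\mathrm{orth}}$, and it makes the PFS convergence rigorous by invoking the specific asymptotic-equivalence results of \cite{Gutierrez:Crespo:2008,Gray06}---but your outline captures the same argument.
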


The proof is provided in Appendix~\ref{sec:proof-theor-optexp}.
Let us briefly comment the best error exponent
formula~(\ref{eq:WF-err-exp}). First we recall that for any $\sigma_1^2,\sigma_2^2>0$,
$$
D\left(\N(0,\sigma_1^2)\,||\,\N(0,\sigma_2^2)\right)=-\frac12\left[\log\frac{\sigma_1^2}{\sigma_2^2}+1-\frac{\sigma_1^2}{\sigma_2^2}\right]\;,
$$
which is increasing as $\sigma_1/\sigma_2$ gets away from 1 from above or
below. 
Since $\Phi$ is nondecreasing, we see that the frequencies $\omega$
lying in $\Delta_c$ are those that maximize
$D\left(\N(0,\sigma^2)\,||\,\N(0,f(\omega)+\sigma^2)\right)$ in $[-\pi,\pi]$.
Thus $K_{\mathrm{orth}}(c)$ can be interpreted as some distance between the two
spectral densities $\sigma^2$ (corresponding to $H_0$) and $f+\sigma^2$
(corresponding to $H_1$) restricted to a set of frequencies where these two
spectral densities are the furthest apart.


\subsection{Illustration and comparisons}\label{sec:numee}

Error exponents $K_{\mathrm{orth}}$ and $K_{\mathrm{iid}}$ defined in sections
\ref{sec:pfs} and \ref{sec:iid} depends on the following parameters: the
spectral density $f$, the noise level $\sigma$, along with the sensors growth
ratio $c$. When using the orthogonal strategy, one can expect that the more
peaky $f$ is, the more efficient the compression will be. That is, by using
only a few sensors configured at the peak frequencies, one will get a
attractive exponent error. This should also lead to a sharp increase of the
error exponent curve $K_{\mathrm{orth}}(c)$ for small $c$. On the contrary, when $f$ is nearly flat
(with a small range of values), there are no priviledged frequencies for the
sensors to forward and the error exponent should increase slowly 
as $c$ gets larger. Let us illustrate these intuitive arguments with numerical
experiments.
We consider two spectral densities corresponding to ARMA processes. The
corresponding plots are depicted in Fig. \ref{fig:spectral_dens}.
$$
\begin{array}{rcl}
  f_1(\omega) & = & s_1^2\left|\frac{1 - \frac{1}{2}\exp(i\omega) + \frac{1}{4}\exp(2i\omega)}{1 - \frac{1}{2}\exp(i\omega) - \frac{1}{5}\exp(2i\omega) - \frac{1}{10}\exp(3i\omega)}\right|^2\\
  f_2(\omega) & = & s_2^2\left|\frac{1 + \frac{7}{10}\exp(i\omega) - \frac{1}{5}\exp(2i\omega)}{1 + \frac{2}{5}\exp(i\omega) - \frac{3}{10}\exp(2i\omega)}\right|^2\\
\end{array}
$$

\begin{figure}[ht!]
  \centering
  $$
  \begin{array}{cc}
    \includegraphics[width=.45\linewidth]{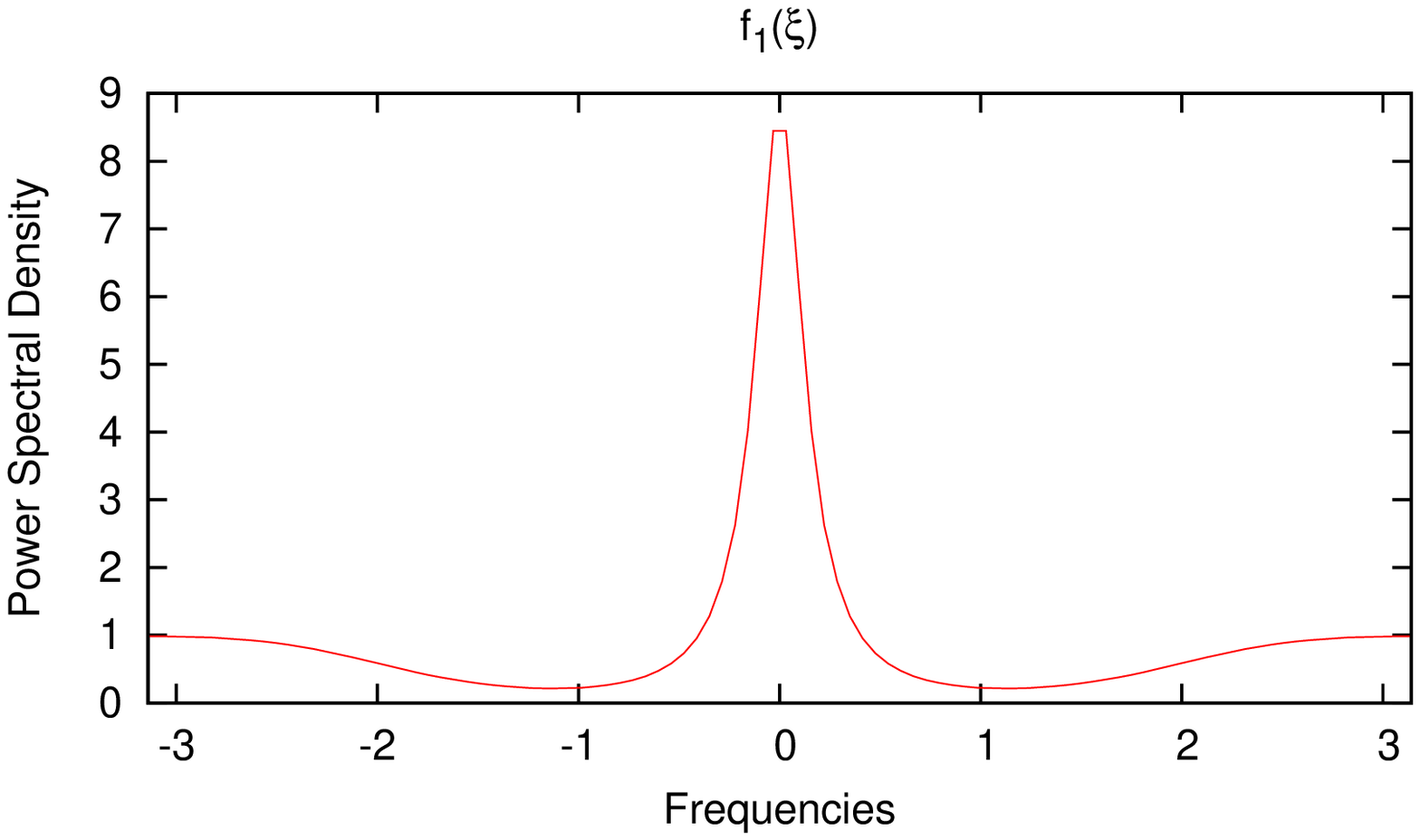}&
    \includegraphics[width=.45\linewidth]{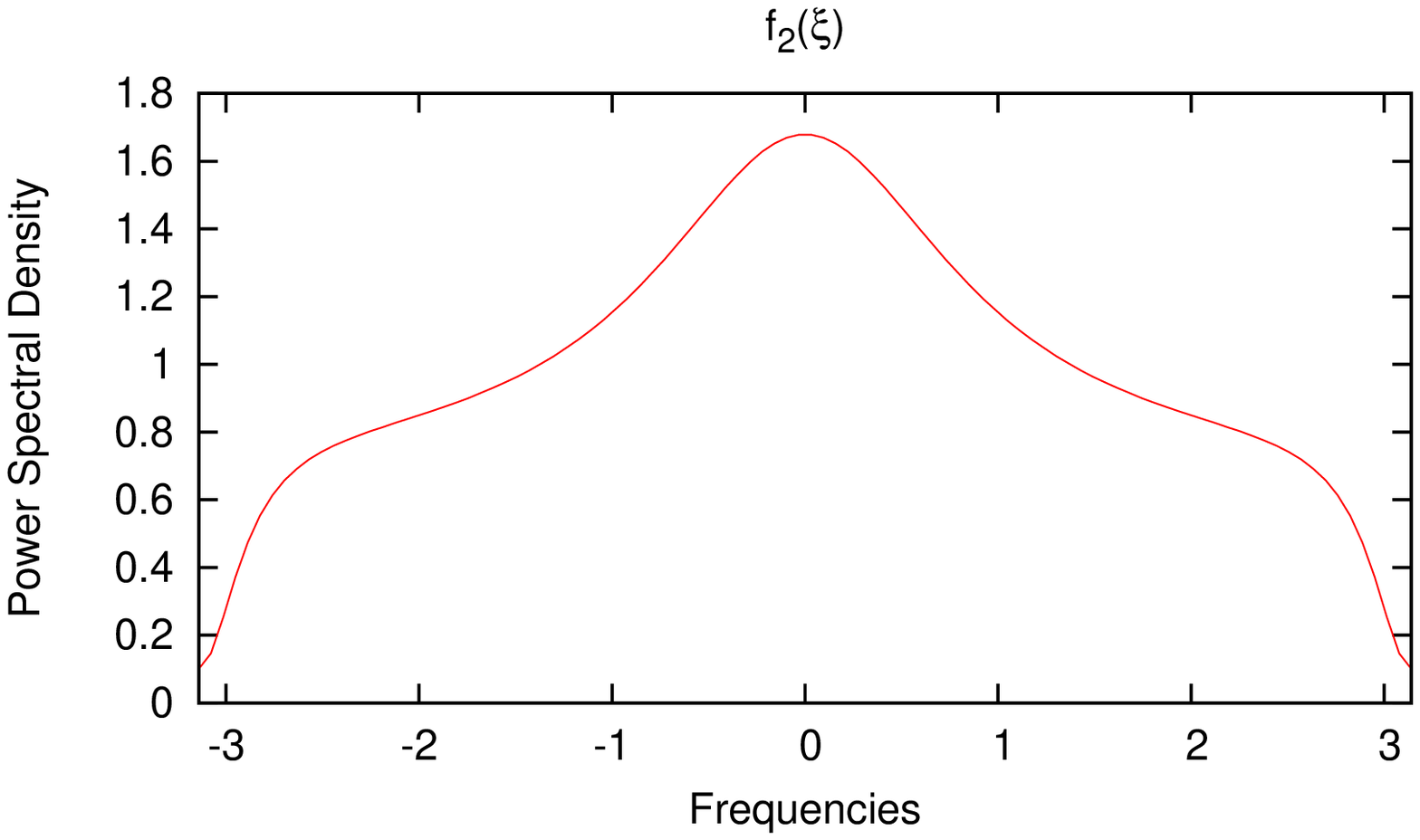}\\
  \end{array}
  $$
  \caption{Left: Spectral density $f_1$ with $s_1$ adjusted such that
    $\frac{1}{2\pi}\int_{-\pi}^{\pi}f_1 = 1$. Right: Spectral density $f_2$ with
    $s_2$ adjusted such that $\frac{1}{2\pi}\int_{-\pi}^{\pi}f_2 = 1$. One can
    notice that $f_1$ has a sharp peak while $f_2$ takes its values in much
    smaller range.}
  \label{fig:spectral_dens}
\end{figure}

Fig.~\ref{fig:ee_curves} represents $K_{\mathrm{iid}}(c)$ and $K_{\mathrm{orth}}(c)$ for
$\sigma=1$. For comparison, we also plotted another error exponent curve,
corresponding to an orthogonal, yet suboptimal strategy which uses precoding matrices
$A_n=[I_k\;\;0]$. This strategy amounts to keep
only the first $k$ values of the signal, independently of $f$. It is
straightforward to prove that the corresponding error exponent writes
$K_{\mathrm{fst}}(c) = c\cdot K_{\mathrm{orth}}(1)$.  

\begin{figure}
\centering
$$
\begin{array}{cc}
  \includegraphics[width=.45\linewidth]{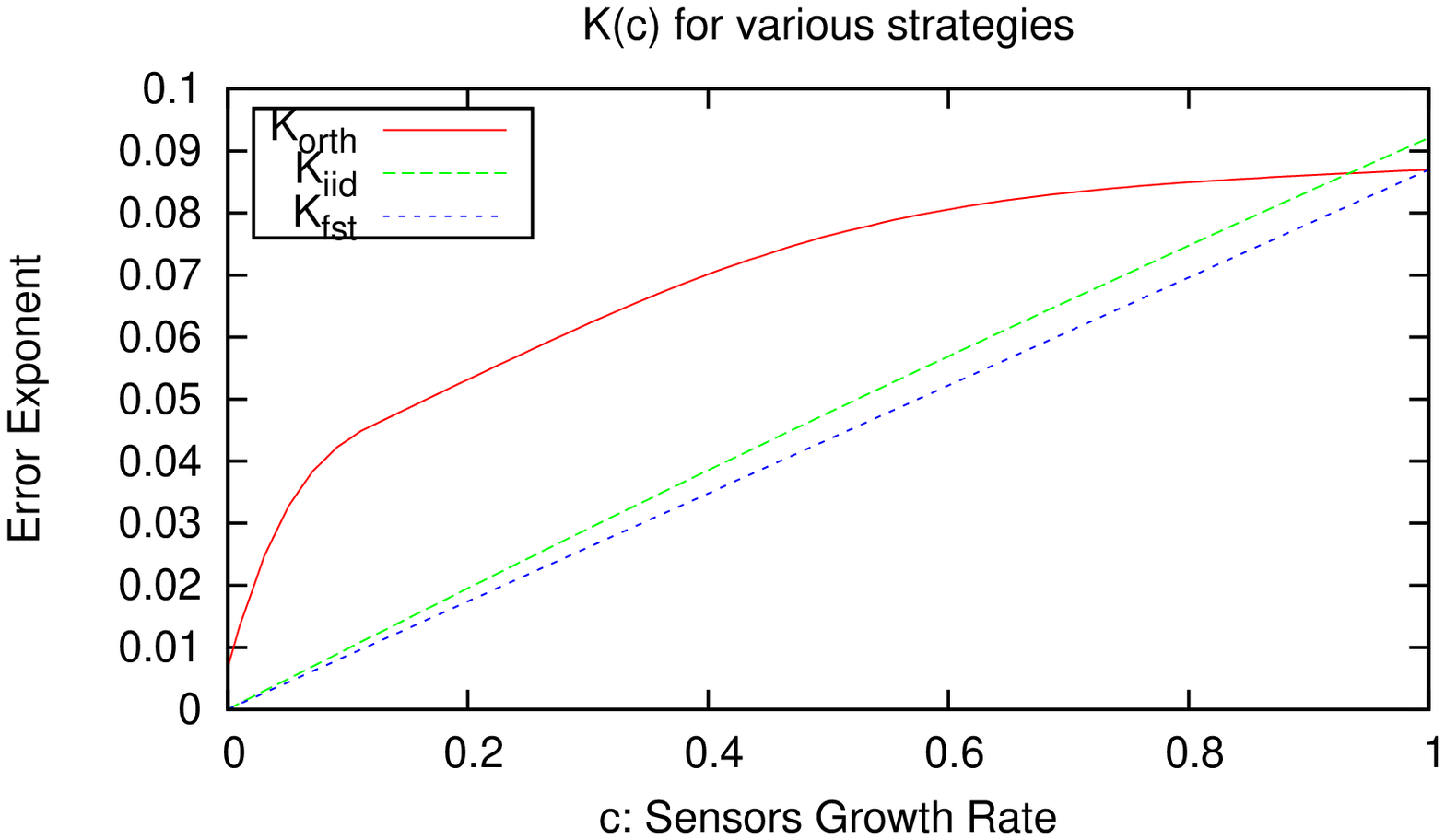}&
  \includegraphics[width=.45\linewidth]{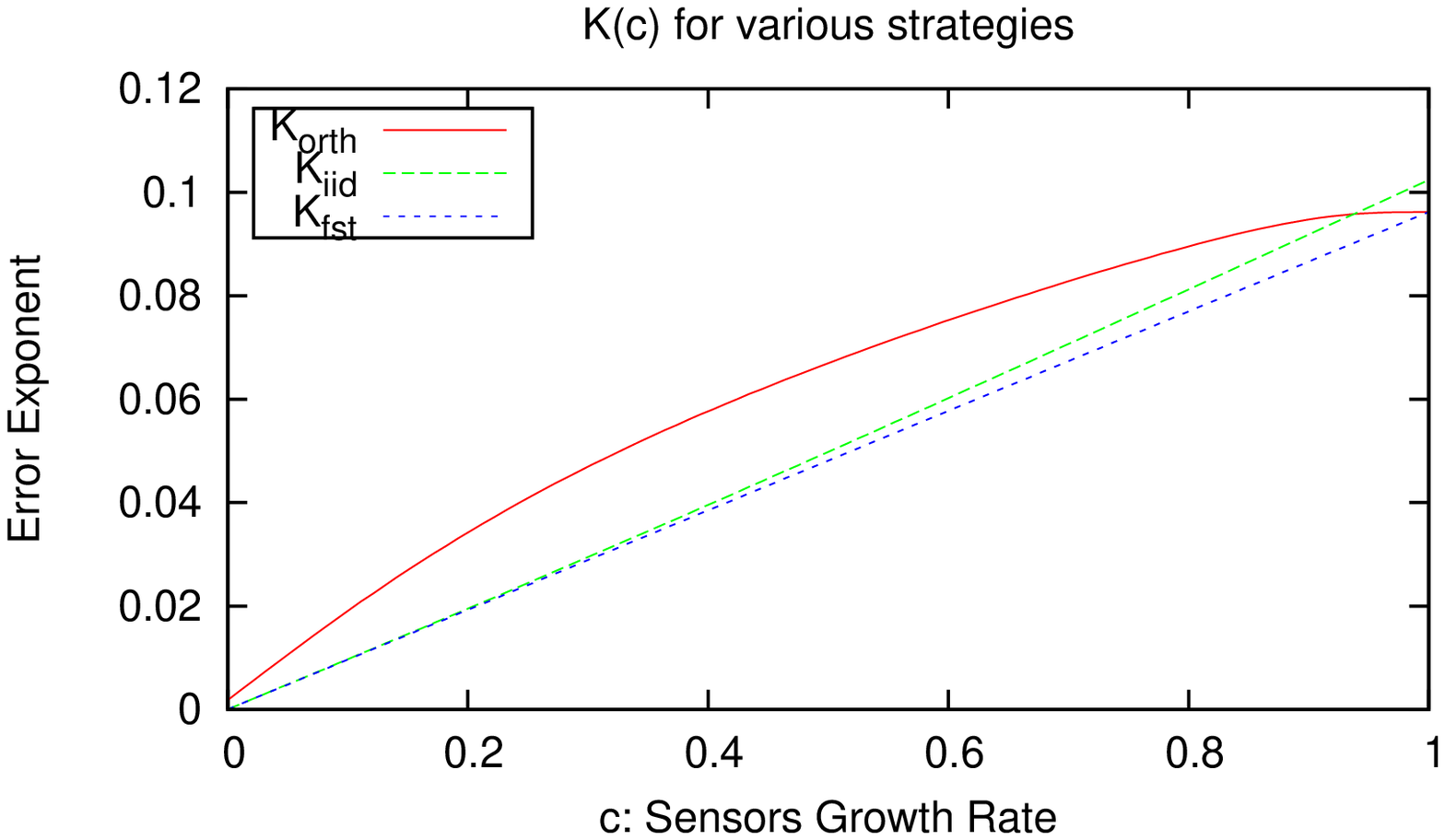}
\end{array}
$$
\caption{Error Exponents $K_{\mathrm{iid}}(c)$, $K_{\mathrm{orth}}(c)$ and
  $K_{\mathrm{fst}}(c)$ as functions of the growth ratio $c=\lim k/n$, for
  spectral density functions $f_1$ (left) and $f_2$ (right).}\label{fig:ee_curves}
\end{figure}

One can notice several numerical facts on Fig.~\ref{fig:ee_curves}. First, as
expected from section~\ref{sec:pfs}, $K_{\mathrm{fst}}$ is always below
$K_{\mathrm{orth}}$. Remark that, as expected, $K_{\mathrm{orth}}$ has a
sharper increase near $c=0$ when used with $f_1$ than when used with $f_2$. The
fact that the random iid strategy seems to behave better for $c$ close to $1$
is more surprising but it reveals the following interesting fact: in some
circumstances, a \emph{non-orthogonal strategy may outperform an optimal
  orthogonal strategy}. Let us try to interpret this result. When setting up
the sensor network design, one faces a tradeoff. Either use an extra sensor
over the same frequency that the previous sensor in order to denoise their
common measurment, or use this extra sensor over a new frequency in order to
discover another part of the spectral density $f$. At small levels of noise, it
is always more interesting to discover $f$ at new frequencies than to denoise
ones already used by other sensors; indicating that the orthogonal strategy is
always the best. But for high levels of noise, it may become more efficient to
repeat (and thus denoise) key frequencies than to discover less important
ones. To support this claim, we refer to Fig.~\ref{fig:ee_curves_sigma} where
we chose two levels of noise, one that is larger ($\sigma^2 = 4$) than the one
used in Fig.~\ref{fig:ee_curves}, and one that is smaller ($\sigma^2 =
.5$). One can see that when $\sigma^2 = .5$, the best orthogonal strategy
outperforms the two others, whereas for $\sigma^2=4$ the upcrossing of
$K_{\mathrm{iid}}$ over $K_{\mathrm{orth}}$ near $c=1$ is more important than on
Fig.~\ref{fig:ee_curves}. The same conclusions hold for both spectral densities
$f_1$ and $f_2$.

\begin{figure}
\centering
$$
\begin{array}{cc}
  \includegraphics[width=.45\linewidth]{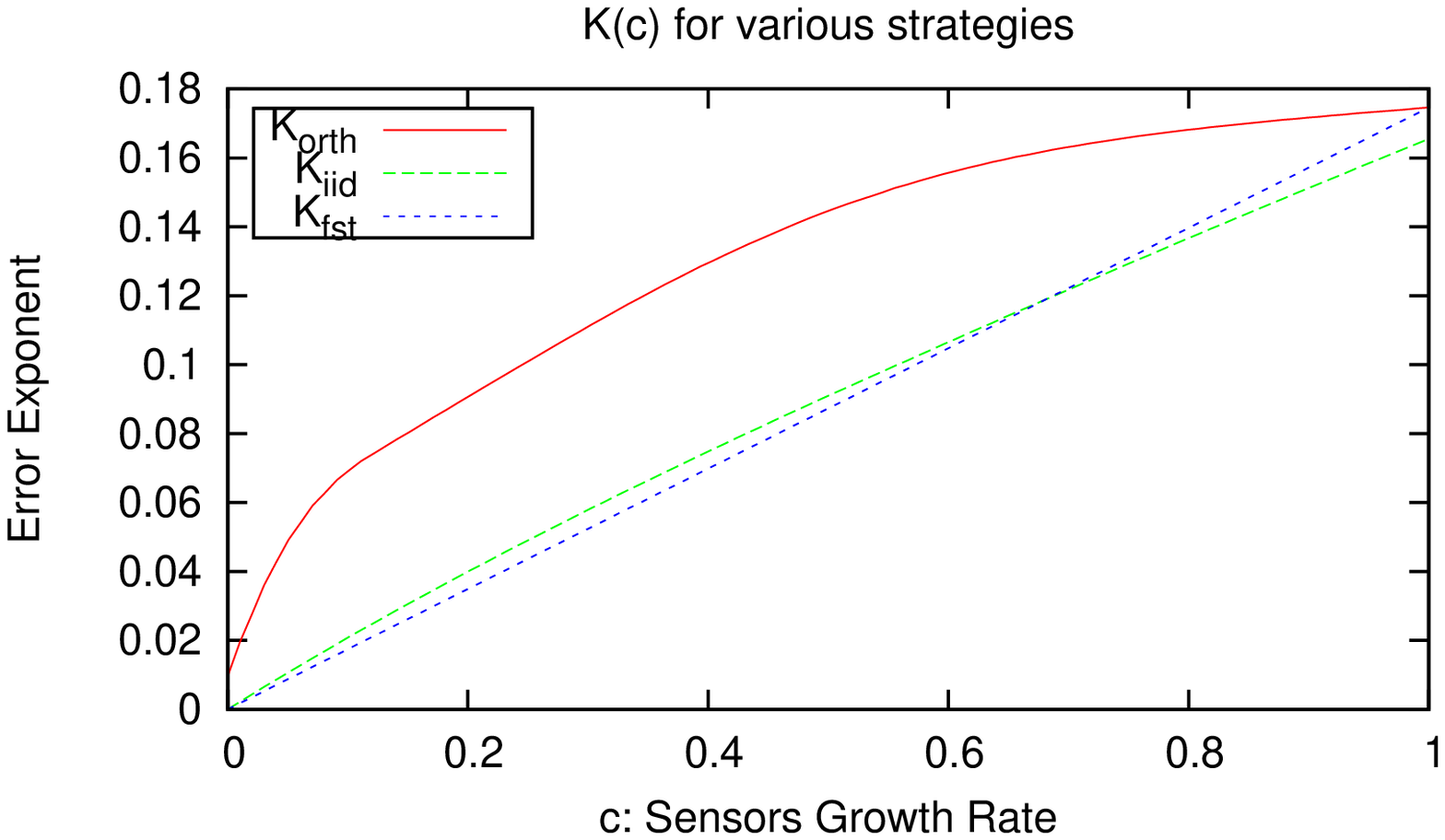}&
  \includegraphics[width=.45\linewidth]{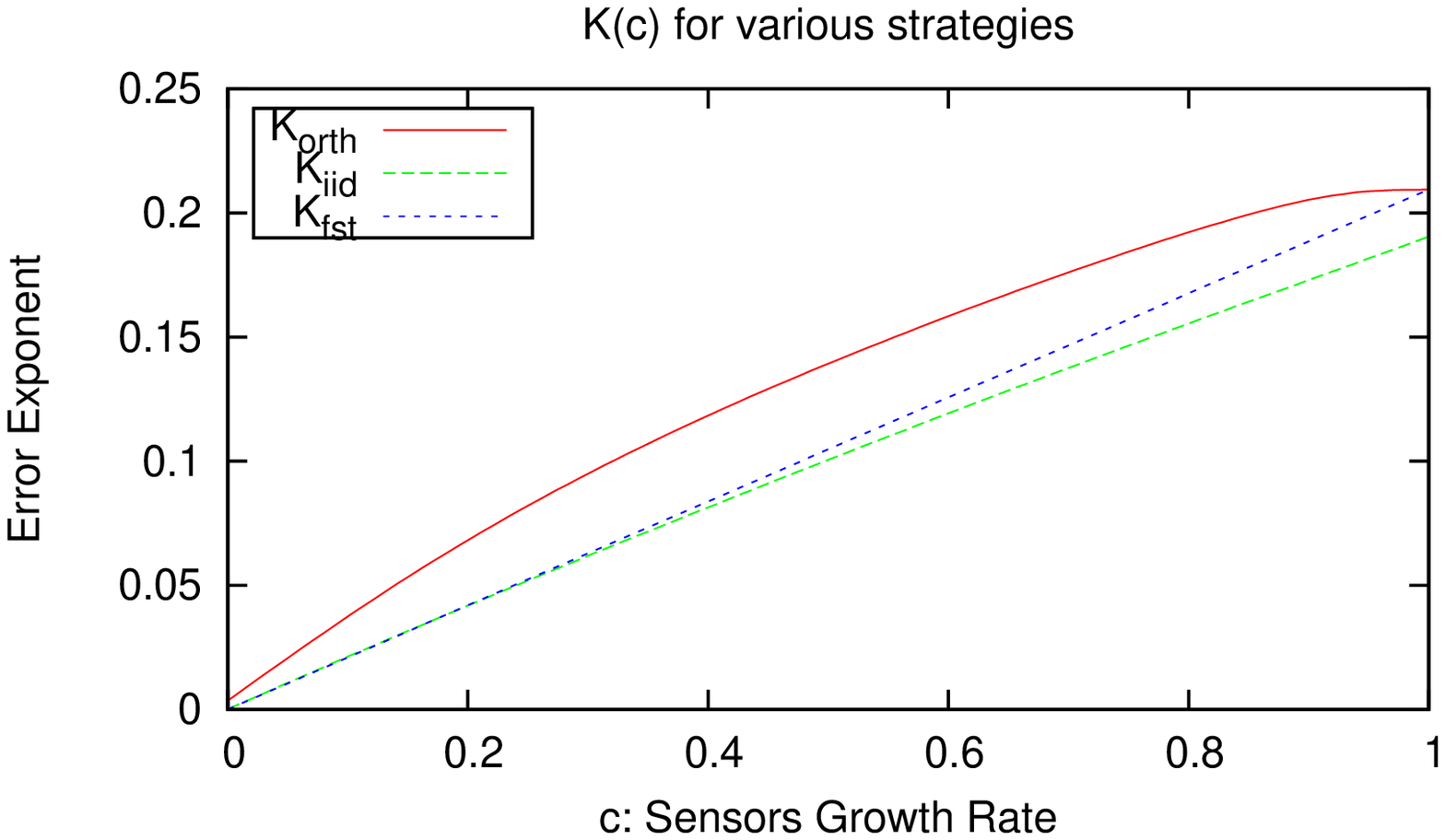}\\
  \includegraphics[width=.45\linewidth]{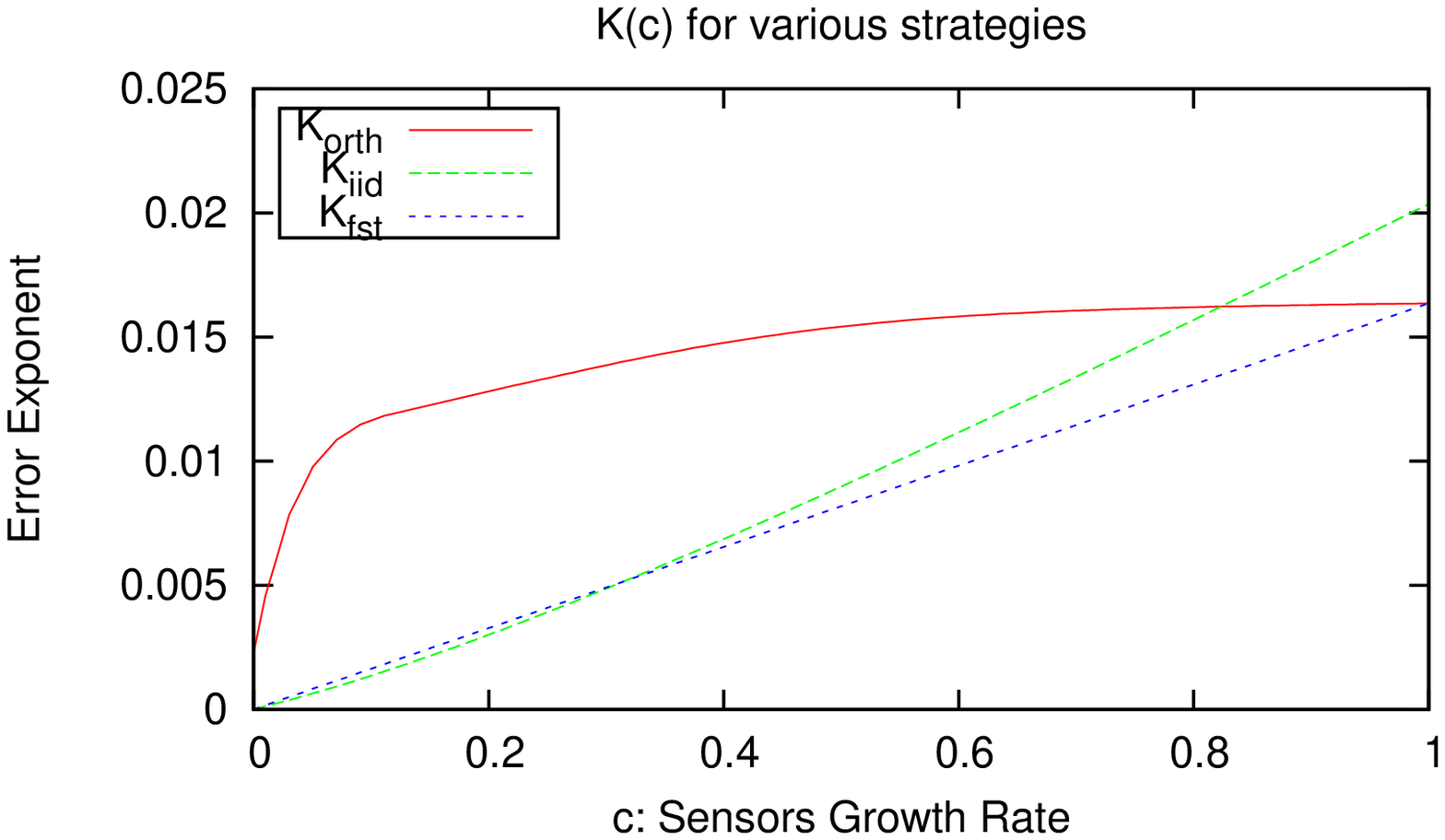}&
  \includegraphics[width=.45\linewidth]{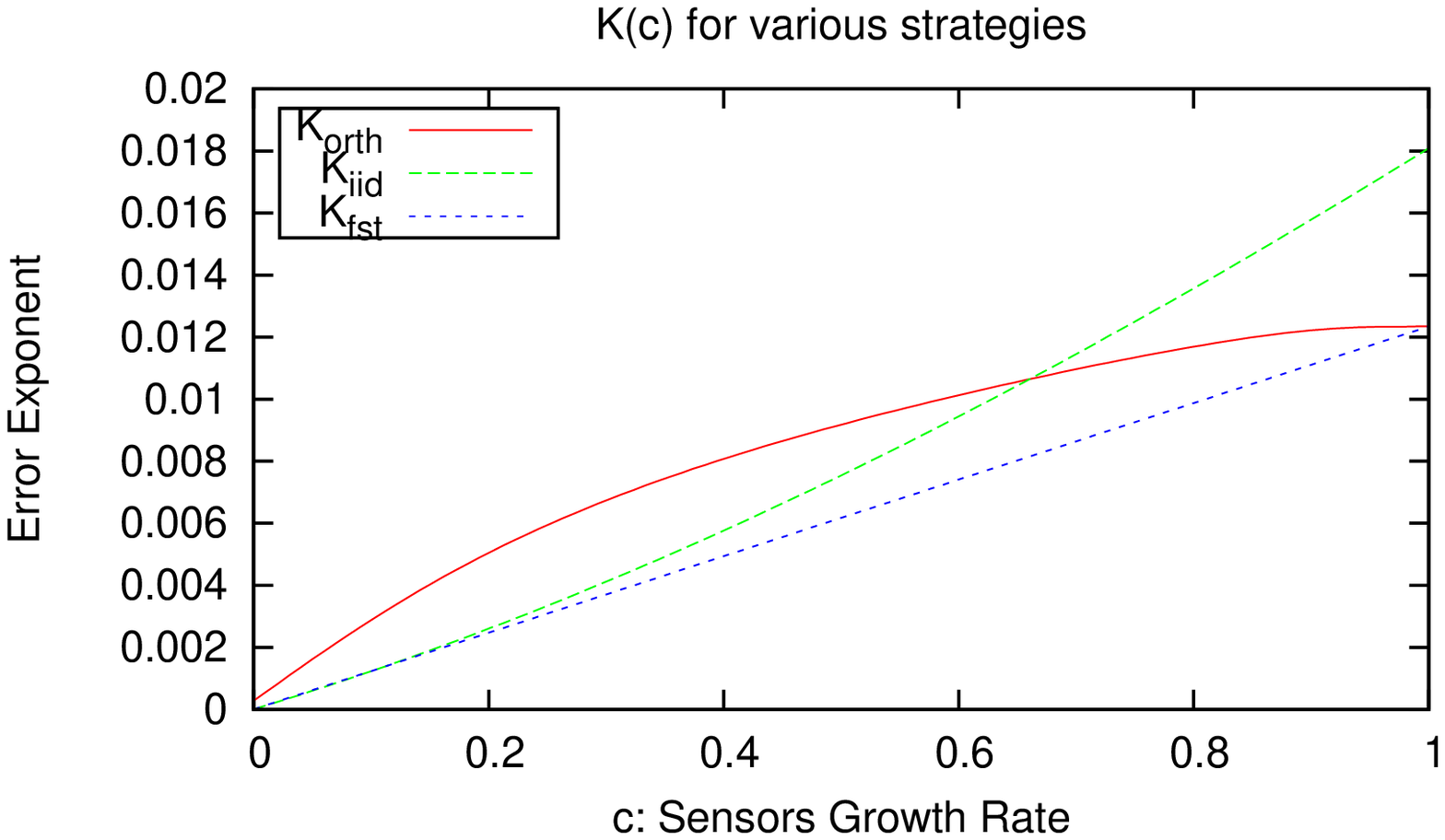}\\
\end{array}
$$
\caption{Error Exponent curves for spectral 
  density functions $f_1$ (left) and $f_2$ (right) with
  $\sigma^2 = 1/2$ (top) and $\sigma^2 = 4$ (bottom).}\label{fig:ee_curves_sigma}
\end{figure}

\section{A PFS-based low complexity test}\label{sec:approxPFS}

Results of the previous section indicate that the principal frequencies strategy
is a good candidate for implementation in dumb sensor networks. Indeed it
requires only few cooperation between the nodes and the fusion center, and is
attractive from an error exponent perspective.  In this section, we prove
furthermore that when PFS is used, then a test procedure can be proposed
which is much less complex than the LRT, and which achieves nevertheless the
same error exponent. 

We assume throughout this section that PFS is used \emph{i.e.}, each precoding
matrix is given by $A_n = W_n$ where $W_n$ is defined by~(\ref{eq:matricePFS}).

\subsection{A low complexity test}

Recall that the LRT rejects the null hypothesis when the LLR~(\ref{eq:exprLLR})
is above a threshold. As the terms $k\log\sigma^2$ and $\log\det
(A_n^T\Gamma_nA_n+\sigma^2I_k)$ are constant w.r.t. the observation $\bs z$, it
is clear that the LRT reduces to the test which rejects the null hyopthesis for
large values of the statistics:
\begin{equation}
\label{eq:equivLLRT}
\frac{\| \bs z\|^2}{\sigma^2}-\bs z^T(A_n^T\Gamma_n A_n+\sigma^2I_k)^{-1}\bs z \ .
\end{equation}
Unfortunately, the evaluation of the above statistics is computationally
demanding as $k$ gets larger, since it requires the inversion of the $k×
k$ matrix 
$A_n^T\Gamma_n A_n+\sigma^2I_k$. In order to avoid this, we propose to
replace matrix $\Gamma_n$ in~(\ref{eq:equivLLRT}) with its circulant
approximation given by~(\ref{eq:diagFn}).  In other words, product
$A_n^T\Gamma_n A_n$ is replaced by:
$$
A_n^TF_n\, \diag\left(f(0)\dots f(2\pi(n-1)/n)\right)\,F_n^TA_n =
\diag\left(f(2\pi j_1^n/n)\dots f(2\pi j_k^n/n)\right)\ .
$$
This leads directly to the following procedure.

{\bf PFS low complexity (PFSLC) Test:} Reject hypothesis $H_0$ when the statistics
${\cal T}_n$ defined~by:
\begin{equation}
  \label{eq:simpleStat}
  \mathcal{T}_n=\sum_{\ell=1}^{k} |z_{\ell}|^2 \left(\frac1{\sigma^2}-\frac1{\sigma^2+f(2\pi j_{\ell}^n/n)}\right)\;,
\end{equation}
is larger than a threshold.
\smallskip

Although this statistics cannot give rise to a better test than the LRT, its
numerical simplicity makes it worth to be considered. In the next paragraph, we
study the performance of the test and we prove that it performs as well as the
LRT in terms of error exponent.

\subsection{Asymptotic optimality of the PFSLC test}

As the statistics~(\ref{eq:simpleStat}) is no longer a likelihood ratio,
Lemma~\ref{lem:Chen} cannot be used to evaluate the error exponent associated
with the test~(\ref{eq:simpleStat}). Instead, we must resort to arguments of
large deviations theory. Specifically, we shall study the large deviation
behaviour of the test associated to this statistic, that is the limit of
$-n^{-1}\log\bP_1(\mathcal{T}_n\leq\eta_n(\p))$ where $\eta_n(\p)$ is the
$(1-\p)$-quantile of the statistic $\mathcal{T}_n$ under $H_0$,
$\bP_0(\mathcal{T}_n>\eta_n(\p))=\p$. Under mild assumptions, we show below that
this limit is given by the error exponent of the PFS. Hence, as far as error
exponents are considered, there is no loss in the performance in using the
statistic $\mathcal{T}_n$ defined in~(\ref{eq:simpleStat}) rather than
the likelihood ratio.


\begin{theorem}\label{thm:simpleStat}
  Assume that $\bf A1$ and $\bf A2$ hold true. For any level $\p\in(0,1)$, the statistics
  $\mathcal{T}_n$ defined in~(\ref{eq:simpleStat}) satisfies the following
  property. For $\eta_n(\p)$ such that $\bP_0(\mathcal{T}_n>\eta_n(\p))=\p$,
  the miss probability $\bP_1(\mathcal{T}_n\leq\eta_n(\p))$ satisfies
  $$
  \lim_{n\to\infty}-\frac1n\log\bP_1(\mathcal{T}_n\leq\eta_n(\p))= K_{\mathrm{orth}}
  \; .
  $$
\end{theorem}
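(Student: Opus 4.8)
The plan is to apply the Gärtner--Ellis theorem to the statistic $\mathcal{T}_n$ under $\bP_0$, and then use the explicit form of the rate function together with the ``local'' structure of $\mathcal{T}_n$ to identify the miss-probability exponent with $K_{\mathrm{orth}}$. First I would observe that, under both hypotheses, $\mathcal{T}_n$ is a weighted sum of independent $\chi^2_1$-type variables: writing $\mathcal{T}_n=\sum_{\ell=1}^k c_\ell^n |z_\ell|^2$ with $c_\ell^n=\sigma^{-2}-(\sigma^2+f(2\pi j_\ell^n/n))^{-1}$, the $z_\ell$ are independent centered Gaussians with variance $\sigma^2$ under $H_0$ and $\sigma^2+f(2\pi j_\ell^n/n)$ under $H_1$, because $W_n^TW_n=I_k$ and $W_n^T\Gamma_nW_n$ is approximately diagonal. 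So the normalized log-moment generating function $\frac1n\log\E_0[e^{n\lambda\mathcal{T}_n}]=-\frac1{2n}\sum_{\ell=1}^k\log(1-2\sigma^2\lambda c_\ell^n)$ is, by a Riemann-sum argument over the principal frequencies $2\pi j_\ell^n/n$, convergent as $n\to\infty$ to $\Lambda_0(\lambda)=-\frac1{4\pi}\int_{\Delta_c}\log\bigl(1-2\sigma^2\lambda\,g(\omega)\bigr)\rmd\omega$ with $g(\omega)=\sigma^{-2}-(\sigma^2+f(\omega))^{-1}$; here Assumption $\bf A2$ guarantees that the empirical distribution of $\{f(2\pi j_\ell^n/n)\}_{\ell\le k}$ converges to the law of $f$ restricted to $\Delta_c$ (this is Lemma~\ref{lem:DeltaC} territory, i.e.\ $\Leb(\Delta_c)=2\pi c$ and no atoms). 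Continuity of $f$ ($\bf A1$) and boundedness of the weights then make $\Lambda_0$ finite and differentiable on a neighbourhood of the relevant interval and steep at the boundary of its effective domain.

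Next I would invoke the Gärtner--Ellis theorem: $\mathcal{T}_n$ satisfies a large deviation principle under $\bP_0$ with rate function $\Lambda_0^*(x)=\sup_\lambda(\lambda x-\Lambda_0(\lambda))$. Two consequences follow. On one hand, the quantile $\eta_n(\p)$ (the $(1-\p)$-quantile of $\mathcal{T}_n$ under $H_0$) converges to the mean $m_0:=\Lambda_0'(0)=\frac1{2\pi}\int_{\Delta_c}g(\omega)\,\rmd\omega$, since the LDP forces $\mathcal{T}_n\to m_0$ in $\bP_0$-probability (any level $\p\in(0,1)$ gives the same limiting threshold because the limit is deterministic) — this is exactly the place where the argument parallels the last sentence of Lemma~\ref{lem:Chen}. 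On the other hand, I then run the same Gärtner--Ellis machinery under $\bP_1$: the normalized cumulant is $\Lambda_1(\lambda)=-\frac1{4\pi}\int_{\Delta_c}\log\bigl(1-2(\sigma^2+f(\omega))\lambda\,g(\omega)\bigr)\rmd\omega$, giving an LDP under $\bP_1$ with rate $\Lambda_1^*$. Since $\eta_n(\p)\to m_0$ and $m_0<m_1:=\Lambda_1'(0)=\E_1[\mathcal{T}_n/n]+o(1)$ (the weights $c_\ell^n$ are nonnegative, strictly positive on $\Delta_c$, so raising the variances strictly increases the mean), the event $\{\mathcal{T}_n\le\eta_n(\p)\}$ is a lower-tail large-deviation event under $\bP_1$, and the LDP upper/lower bounds pin down $-\frac1n\log\bP_1(\mathcal{T}_n\le\eta_n(\p))\to\Lambda_1^*(m_0)$.

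It then remains to compute $\Lambda_1^*(m_0)$ in closed form and check it equals $K_{\mathrm{orth}}$. Because $m_0$ is precisely the $\bP_0$-mean, the optimizing $\lambda$ in $\Lambda_1^*(m_0)=\sup_\lambda(\lambda m_0-\Lambda_1(\lambda))$ can be found explicitly: the first-order condition $\frac1{2\pi}\int_{\Delta_c}\frac{(\sigma^2+f(\omega))g(\omega)}{1-2(\sigma^2+f(\omega))\lambda g(\omega)}\rmd\omega = m_0=\frac1{2\pi}\int_{\Delta_c}g(\omega)\rmd\omega$ is solved, frequency by frequency, by $\lambda^*=\tfrac12\bigl(\sigma^{-2}-(\sigma^2+f(\omega))^{-1}\bigr)/\bigl((\sigma^2+f(\omega))g(\omega)-\sigma^2 g(\omega)\bigr)$ — more cleanly, the integrand matching forces the standard single-coordinate tilt $2(\sigma^2+f(\omega))\lambda^* = 1 - \sigma^2/(\sigma^2+f(\omega))$ pointwise, and substituting back collapses the Legendre transform to $\frac1{4\pi}\int_{\Delta_c}\bigl[\log\frac{\sigma^2+f(\omega)}{\sigma^2}+\frac{\sigma^2}{\sigma^2+f(\omega)}-1\bigr]\rmd\omega$, which is exactly $\frac1{2\pi}\int_{\Delta_c}D\bigl(\N(0,\sigma^2)\,\|\,\N(0,f(\omega)+\sigma^2)\bigr)\rmd\omega=K_{\mathrm{orth}}$ by the KL formula recalled after Theorem~\ref{thm:optexp}. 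I expect the main obstacle to be the rigorous control of the two Riemann-sum limits uniformly in $\lambda$ over a neighbourhood of $[0,\lambda^*]$ — i.e.\ showing $\frac1n\log\E_i[e^{n\lambda\mathcal{T}_n}]\to\Lambda_i(\lambda)$ with the convergence good enough (and the limits steep enough) to legitimately apply Gärtner--Ellis, and in particular handling the set $\Delta_c$ correctly when $f$ attains the threshold value $f^{-1}((1-c)\text{-quantile})$, where Assumption $\bf A2$ (no atoms of $\Leb\circ f^{-1}$) is exactly what rescues the boundary behaviour; the additional subtlety that $W_n^T\Gamma_n W_n$ is only approximately diagonal must also be absorbed, using the Szegő-type bound that the off-diagonal correction has vanishing operator-norm contribution to the cumulant.
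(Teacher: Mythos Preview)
Your plan and the paper's proof share the same skeleton: establish an LDP for $n^{-1}\mathcal{T}_n$ under each hypothesis, show the threshold $\eta_n(\alpha)$ converges to the $H_0$-mean $x_0$, and evaluate the $H_1$ rate function there. The decisive computational step---that the per-frequency tilt is maximized at a common $\lambda^*=-1/2$ independent of $\omega$, so that $\sup_\lambda\int(\cdots)\rmd\omega=\int\sup_\lambda(\cdots)\rmd\omega$ and the answer collapses to the KL integral---is the same in both arguments.

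The difference is in how the LDP is obtained. Rather than applying G\"artner--Ellis to the $k$-vector $\bs z$, the paper observes that $\mathcal{T}_n$ has \emph{exactly} the law of $u_n^TM_nu_n$ for an $n$-dimensional stationary Gaussian sample $u_n$ (spectral density $\sigma^2+f$ under $H_1$, $\sigma^2$ under $H_0$) and $M_n=F_n^TD_nF_n$ with $D_n$ diagonal. This identity is exact, not approximate, because $W_n=F_n^TS_n$ for a selection matrix $S_n$, so $W_n^T\Gamma_nW_n+\sigma^2I_k=S_n^TF_nT_n(f+\sigma^2)F_n^TS_n$, i.e.\ the selected Fourier coordinates of $u_n$ have precisely the covariance of $\bs z$. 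The paper then invokes the Bercu--Gamboa--Rouault (1997) LDP for Gaussian quadratic forms, which only requires checking weak convergence of the relevant empirical spectral measure and convergence of the top eigenvalue. This packaging sidesteps both the G\"artner--Ellis steepness verification and the off-diagonal correction you flag as the main obstacle.

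On that correction: your proposed fix is the one weak spot. The difference $W_n^T\Gamma_nW_n-\diag(f(2\pi j_\ell^n/n))$ (equivalently $T_n(f)$ minus its circulant approximation, restricted) does \emph{not} vanish in operator norm; its nonzero eigenvalues remain $O(1)$. What holds is asymptotic equivalence in the Gray sense ($n^{-1/2}$ times the Frobenius norm tends to zero, with uniformly bounded spectral radii), and that is what carries the normalized log-determinant $\frac1n\log\det(I-2\lambda C_n^{1/2}R_nC_n^{1/2})$ to its limit. With this substitution your G\"artner--Ellis route can be completed: $\lambda^*=-1/2$ lies in the interior of the effective domain of $\Lambda_1$, so $x_0$ is an exposed point and the LDP lower bound applies; you also need continuity of $\Lambda_1^*$ at $x_0$ to handle the moving threshold $\eta_n(\alpha)\to x_0$, which the paper isolates as a separate lemma.
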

The proof of this result is provided in Appendix~\ref{sec:proof-theor-statSimple}.


\section{Simulations}
\label{sec:simus}

The error exponent theory is inherently asymptotic. In this section we provide
numerical experiments to analyze the performance of the PFS on simulated data
for finite $n$ since we have already proved that the error exponent curve is the
same. The point here is to test how well the error exponent theory is relevant
for finite $n$.

We use the same spectral density functions $f_1$ and $f_2$ as in section
\ref{sec:numee}, whose error exponents are displayed in Fig.
\ref{fig:ee_curves}. We now compare, for a couple of values for $c$, the finite
sample performances of the LRT with the iid, PFS and PCS Strategies by using
their empirical Receiver Operating Characteristic (ROC) curves.  When the PFS is
used, we also consider the PFSLC test of section \ref{sec:approxPFS}.  We have
shown that the LRT with the PFS or the PCS and the PFSLC test share the same
error exponent curve. How well this measure of the performance impacts the whole
ROC curves at finite samples is displayed in Fig. \ref{fig:stratROCcomp}. It
turns out that the PCS, the PFS and the PFSLC have similar ROC curves, as
indicated by the error exponent analysis. One can also notice the good
performance of the PFS, PFSLC and PCS when $c=.1$, $\sigma=1$ and $n=100$ for
$f_1$, which confirms the conclusions drawn from the error exponent curves in
Fig.~\ref{fig:ee_curves}. For $c=.9$, $\sigma=2$ and $n=100$, one can notice
that error exponent curves also provide a good prediction: the iid strategy
slightly outperforms the PFS, PCS and PFSLC.

\begin{figure}[ht]
  $$
  \begin{array}{cc}
    \includegraphics[width=.45\linewidth]{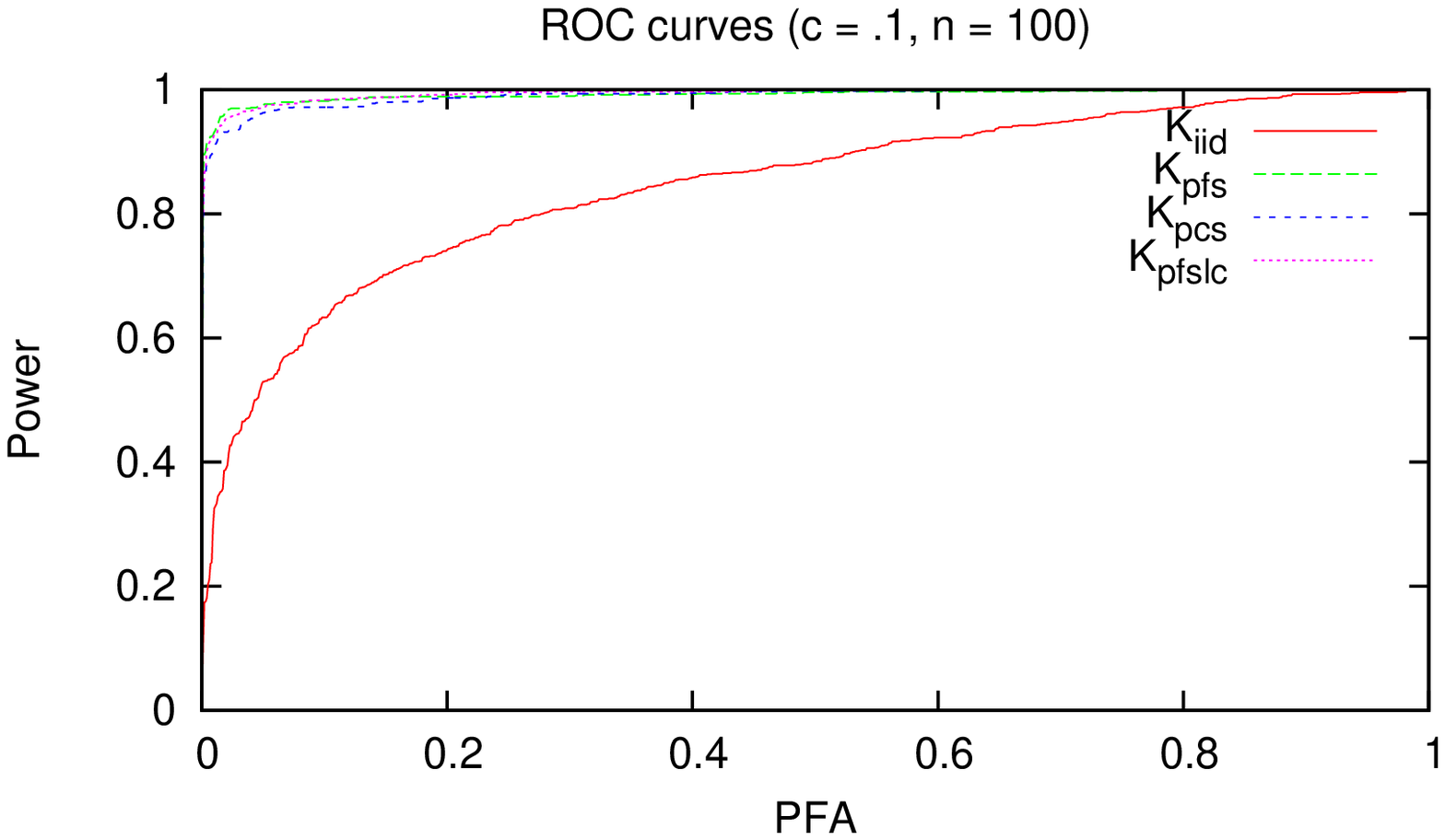}&
    \includegraphics[width=.45\linewidth]{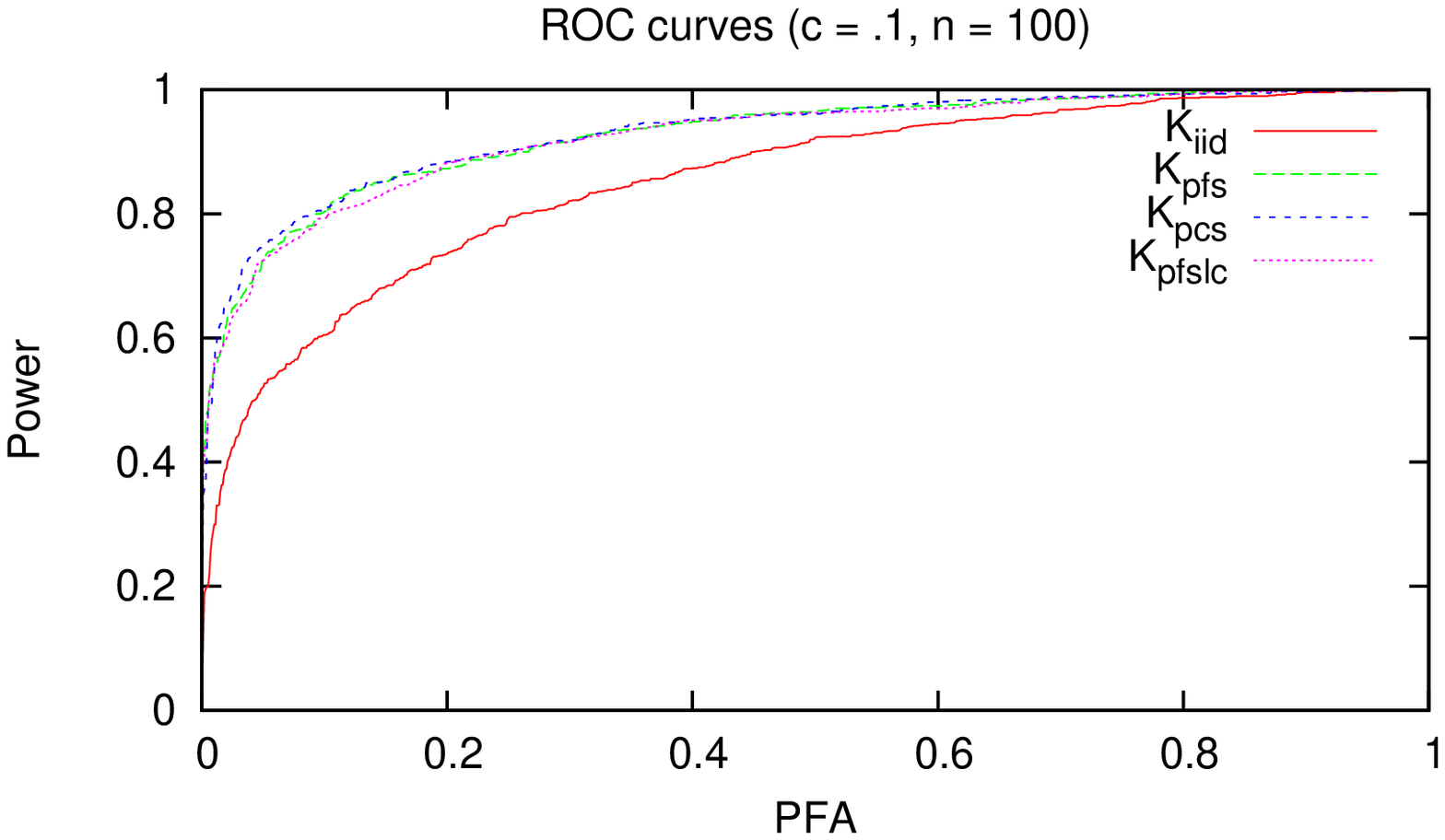}\\
    \includegraphics[width=.45\linewidth]{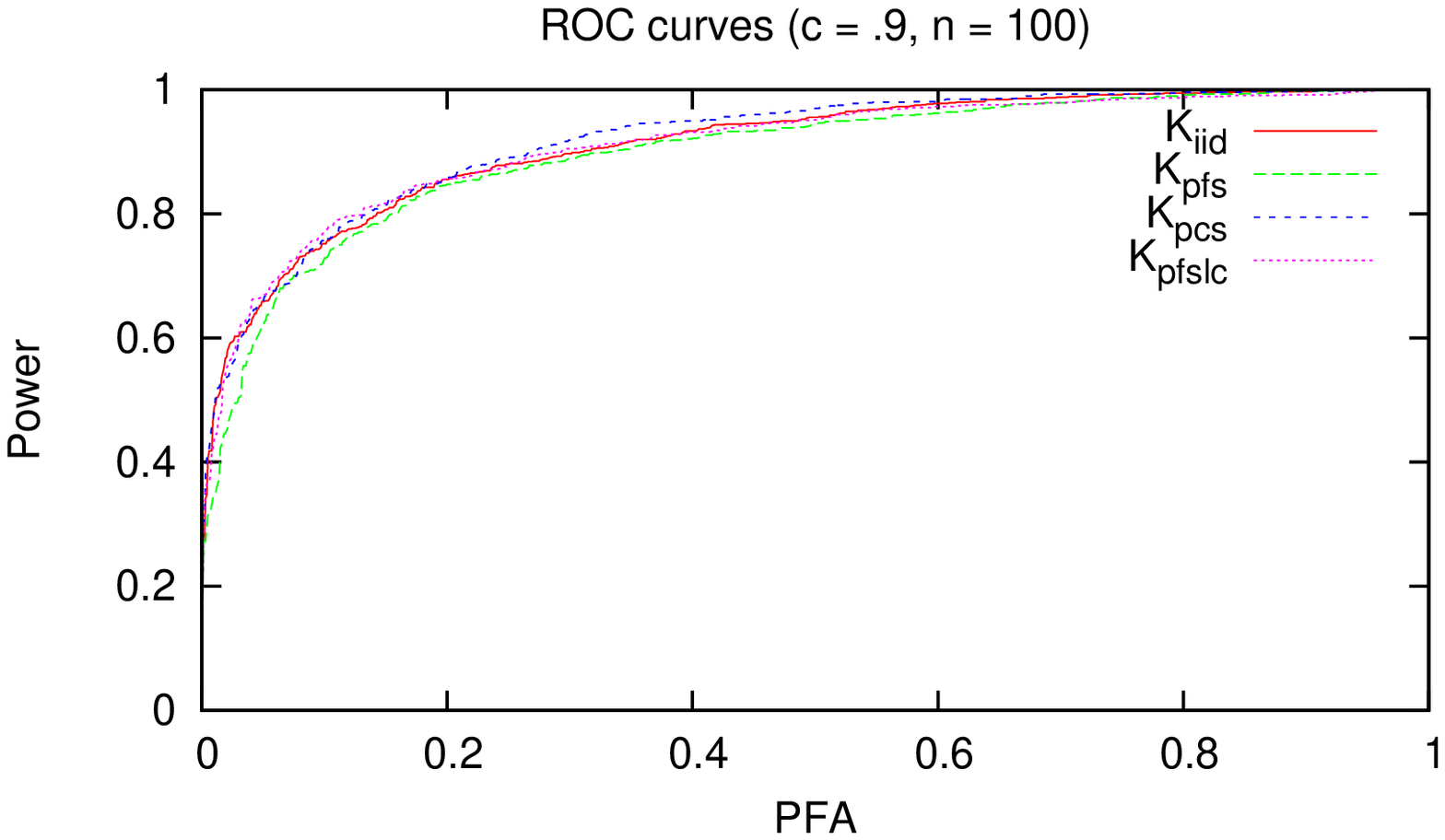}&
    \includegraphics[width=.45\linewidth]{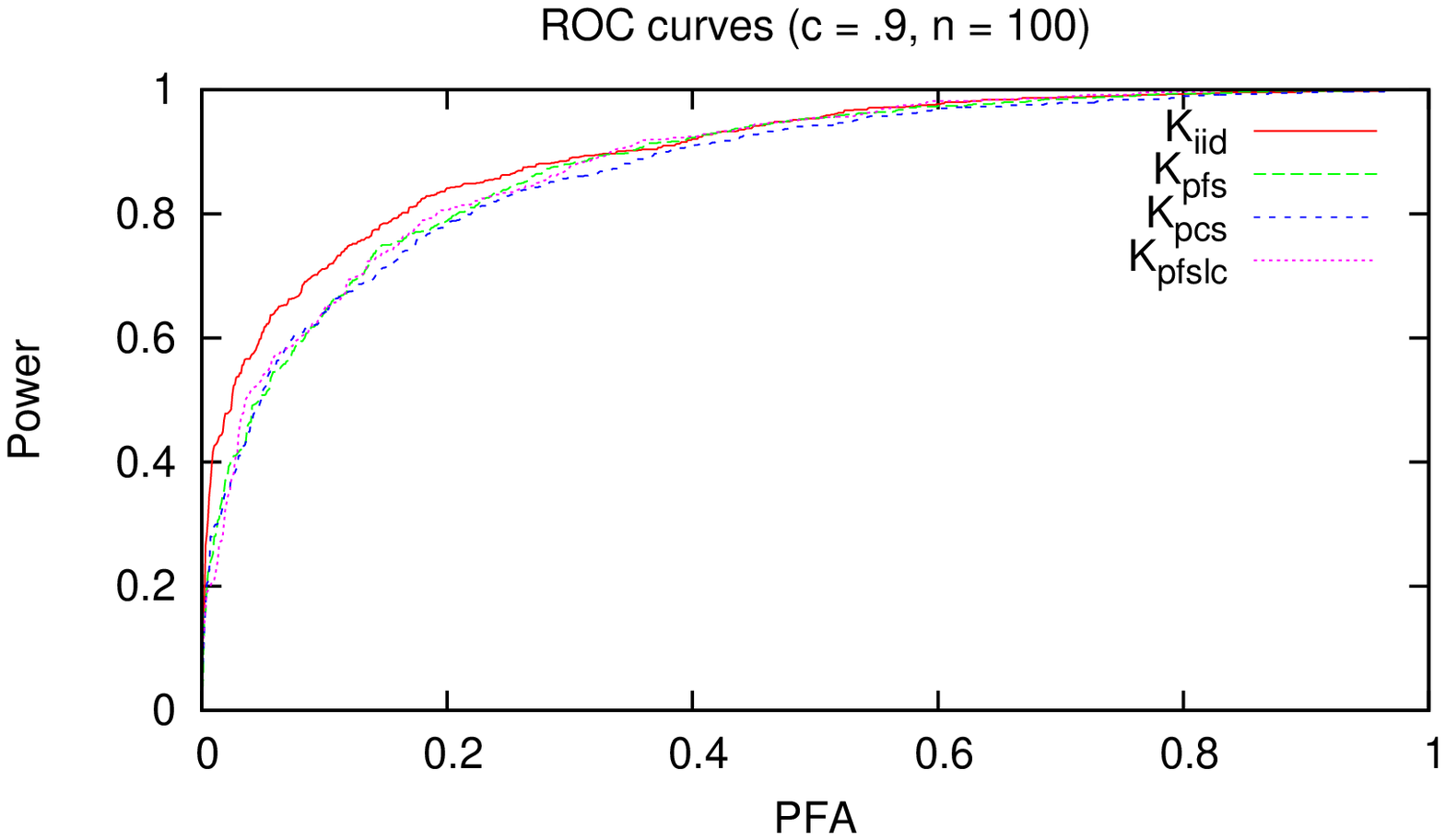}\\
  \end{array}
  $$
  \caption{ROC curves associated to the PFS, PCS, PFSLC and Random iid
  Strategy for spectral density functions $f_1$ (left) and $f_2$ (right). Top: $c=.1$ and
	$\sigma=1$. Bottom $c=.9$ and $\sigma=2$. As predicted by the error exponents
	curves, iid strategy is less efficient when $\sigma=1$ but slightly more
	efficient when $\sigma=2$ for large values of $c$.}\label{fig:stratROCcomp}
\end{figure}

\section{Conclusion}

In this paper, we studied the performance of the Neyman-Pearson detection of a
stationary Gaussian process in noise, using a large wireless sensor network
(WSN). Our results are relevant for the design of sensor networks which are
constrained by limited signaling and communication overhead between the fusion
center and the sensor nodes.  We studied the case where each sensor compresses
its observation sequence using either a random iid linear precoder or an
orthogonal precoder.  In the random precoder case, we determined the error
exponent governing the asymptotic behaviour of the miss probability, when
$k,n\to\infty$ and $k/n\to c\in(0,1)$.  In the orthogonal precoder case, we
exhibit strategies (PCS and PFS) that achieve the best error exponent among all
orthogonal strategies. The PFS has moreover the attractive property of being
well suited for WSN with signaling overhead constraints.  In addition, we proved
that when the PFS is used, a low complexity test can be implemented at the FC as
an alternative to the Likelihood Ratio (Neyman-Pearson) test. Interestingly, the
proposed test performs as well as the LRT in terms of error exponents.

\section*{Acknowledgments}

We thank Jamal Najim for helpful comments and for bringing useful references to
our attention.

\appendices
\section{Proofs of main results}\label{sec:proofs-main-results}

Observe that we may set $\sigma^2=1$ without loss of generality, since it
amounts to divide $f$ by  $\sigma^2$ and the data ${\bs y}_i$ by $\sigma$.
Hence in the following proof sections, we assume $\sigma=1$. In particular the
LLR in~(\ref{eq:exprLLR}) for a precoding matrix $A_n$ with normalized
precoders ${\bs a}_i$, $i=1,\dots,k$ is given by
\begin{equation}
  \label{eq:exprLLRseq1}
2\L_{A_n}=\|{\bs z}\|^2 - \log \det(A_n^{T}\Gamma_nA_n+I_k)- 
{\bs z}^T(A_n^{T}\Gamma_nA_n+I_k){\bs z} \;.  
\end{equation}

\subsection{Proof of Theorem~\ref{the:iid}}
\label{sec:proofiid}

We assume without restriction that $\E((A_{11}^1)^2)=1$.
Due to Lemma~\ref{lem:Chen}, it is sufficient to prove that the normalized LLR
associated to strategy $\A$ converges in probability to the
rhs of equation~(\ref{eq:erriid}) under $H_0$. Expression~(\ref{eq:exprLLRseq1}) of
the LLR relies on the assumption that each precoder $\bs a_i$ has unit norm,
which is generally not the case for $A_n$ defined as in Theorem~\ref{the:iid}.
Since the false alarm and miss probabilities of this LRT
do not depend on the norms $\|\bs a_i\|$, $i=1,\dots,k$, it is equivalent to consider
precoders defined by the matrix
\begin{equation}
  \tilde A_n = A_n P_n^{-1/2}\ ,\label{eq:iid}
\end{equation}
where $P_n=\diag\left(\sum_{i=1}^n(A_{ij}^n)^2:{j=1,\dots,k}\right)$. 
With this definition, we may use expression~(\ref{eq:exprLLRseq1}) 
which is valid for normalized precoders.
In order to prove
Theorem~\ref{the:iid}, it is now sufficient to show that $-(1/n){\cal
  L}_{\tilde A_n}$ converges to the constant $K_{\mathrm{rnd}}$ defined in~(\ref{eq:erriid}).

The main issue lies in the asymptotic study of the two terms $\frac
1n\log\det(\tilde A_n^T\Gamma_n\tilde A_n+I_k)$ and $\frac 1n\bs
z^T(\tilde A_n^T\Gamma_n \tilde A_n+I_k)^{-1}\bs z$.  This can be done
by successively using the results of~\cite{hacMPRF}, \cite{sil95} and
\cite{hacAAP}. The crucial point is to characterize the limiting spectral
measure of matrix $\tilde A_n^T\Gamma_n\tilde A_n$. Define:
\begin{eqnarray*}
  R_n &=& \tilde A_n^T\Gamma_n\tilde A_n\ , \\
  S_n &=& \frac 1n A_n^T\Gamma_nA_n\ .
\end{eqnarray*}
First, we prove (see Lemma~\ref{lem:levybai} below) that the spectral measure of
$R_n$ is asymptotically close to the one of $S_n$ in a sense which is made
clear below.  Second, we apply the results of~\cite{yin86,sil95} along
with~\cite{GrSz} to determine the limiting spectral measure of $S_n$. Finally,
closed form expressions of the desired quantities follow from the results
of~\cite{hacAAP}.

Denote by $d$ the Lévy distance on the set of distribution functions. We recall
that $F_Q$ denotes the distribution function of the \emph{spectral measure} of
$Q$ (see Section~\ref{sec:iid}).  
\begin{lemma}
  \label{lem:levybai}
  As $n,k\to\infty$, ${d}(F_{R_n},F_{S_n})$ converges to zero in probability.
\end{lemma}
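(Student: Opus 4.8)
The plan is to compare $R_n = \tilde A_n^T \Gamma_n \tilde A_n$ with $S_n = \frac1n A_n^T\Gamma_n A_n$ by controlling the perturbation introduced by the diagonal normalization $P_n$. Write $\tilde A_n = A_n P_n^{-1/2}$, so that $R_n = P_n^{-1/2}(A_n^T\Gamma_n A_n) P_n^{-1/2} = n\, P_n^{-1/2} S_n P_n^{-1/2}$. The entries of $P_n/n$ are the quantities $\frac1n\sum_{i=1}^n (A_{ij}^n)^2$, i.e. empirical averages of $n$ iid copies of $(A_{11}^1)^2$, which has mean $1$ by the normalization $\E((A_{11}^1)^2)=1$. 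First I would invoke the (weak) law of large numbers: for each fixed $j$, $\frac1n\sum_i (A_{ij}^n)^2 \to 1$ in probability. The diagonal matrix $D_n := P_n/n$ therefore satisfies $\max_{1\le j\le k}|D_n(j,j)-1| \to 0$ in probability; this requires a small uniformity argument over the $k$ diagonal entries (a union bound together with $k/n\to c$ and a Markov/Chebyshev-type tail bound, or a truncation argument to reach the same conclusion under only a second moment assumption — this is the one technical point that needs care, since only a finite second moment is assumed).

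Next I would turn this into a bound on the L\'evy distance. Since $R_n = D_n^{-1/2} S_n D_n^{-1/2}$ with $D_n$ diagonal and close to $I_k$, I would use a standard perturbation inequality for empirical spectral distributions. The cleanest route is via the rank/norm inequalities for the L\'evy distance: writing $R_n - S_n = D_n^{-1/2}S_n D_n^{-1/2} - S_n$, on the event $\{\max_j |D_n(j,j)-1|\le \varepsilon\}$ one has $\|D_n^{-1/2} - I_k\|_{\mathrm{op}} \le \delta(\varepsilon)$ with $\delta(\varepsilon)\to 0$ as $\varepsilon\to0$, hence $R_n - S_n$ is a (multiplicative) perturbation of small operator norm relative to $S_n$. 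Then I would apply the inequality that for Hermitian matrices $d(F_{R_n},F_{S_n})^3 \le \frac1k \sum_i (\lambda_i(R_n)-\lambda_i(S_n))^2 \le \frac1k \|R_n - S_n\|_F^2$ (the Hoffman–Wielandt inequality bounding the L\'evy distance), and bound $\|R_n - S_n\|_F$ by $\|D_n^{-1/2}-I_k\|_{\mathrm{op}}\big(\|D_n^{-1/2}\|_{\mathrm{op}}+1\big)\|S_n\|_F$. The remaining task is to show $\frac1k\|S_n\|_F^2 = \frac1k\Tr(S_n^2)$ stays bounded (in probability), which follows because the limiting spectral measure of $S_n$ exists and has bounded support — or, more elementarily, because $\E\big[\frac1k\Tr(S_n^2)\big]$ can be computed directly from the moments of the $A_{ij}^n$ and of $\Gamma_n$ (whose eigenvalues are uniformly bounded under $\bf A1$ since $f$ is continuous hence bounded), so Markov's inequality controls it.

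Putting the pieces together: on the event where $\max_j|D_n(j,j)-1|\le\varepsilon$ and $\frac1k\Tr(S_n^2)\le M$, we get $d(F_{R_n},F_{S_n})^3 \le C\,\delta(\varepsilon)^2 M$, and both defining events have probability tending to $1$; letting $\varepsilon\to0$ after $n\to\infty$ yields $d(F_{R_n},F_{S_n})\cp 0$. I expect the main obstacle to be the uniform control of the diagonal normalization $\max_{1\le j\le k}|D_n(j,j)-1|$ under only a finite second moment assumption — a naive union bound over $k\sim cn$ coordinates does not immediately work with only two moments, so a truncation of $(A_{ij}^n)^2$ at a slowly growing level, combined with the fact that the truncated part has vanishing contribution in $L^1$, is the device I would use to push the argument through.
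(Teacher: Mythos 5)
Your overall strategy---compare $R_n=\tilde A_n^T\Gamma_n\tilde A_n$ with $S_n$ through the diagonal normalization $D_n=P_n/n$ and convert a matrix perturbation bound into a bound on the L\'evy distance---is the same as the paper's. The paper uses Bai's inequality $d^4(F_{A^TA},F_{B^TB})\leq \frac{2}{nk}|A-B|^2(|A|^2+|B|^2)$ applied to $A=\Gamma_n^{1/2}\tilde A_n$ and $B=n^{-1/2}\Gamma_n^{1/2}A_n$, whereas you use the Hoffman--Wielandt route; both are viable in principle. The difference that matters is \emph{which norm} of the normalization error each route forces you to control.

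There is a genuine gap at the step you yourself flag as delicate: your argument hinges on $\max_{1\leq j\leq k}|D_n(j,j)-1|\to 0$ in probability, equivalently $\|D_n^{-1/2}-I_k\|_{\mathrm{op}}\to 0$. Under the stated hypothesis (finite second moment of $A_{ij}^n$, hence only a finite \emph{first} moment for $(A_{ij}^n)^2$) this is false in general. For instance, if $\bP\left((A_{11}^1)^2>t\right)\sim t^{-1}(\log t)^{-2}$, then among the $nk\sim cn^2$ iid entries the largest value of $(A_{ij}^n)^2$ exceeds $n^2/(\log n)^3$ with probability tending to one, so $\max_j n^{-1}\sum_i(A_{ij}^n)^2\geq n^{-1}\max_{i,j}(A_{ij}^n)^2\to\infty$. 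Truncation cannot repair this: it makes the tail contribution small in $L^1$, i.e.\ on average over $j$, but the maximum over $j\sim cn$ columns genuinely diverges. The paper sidesteps the issue because Bai's inequality is expressed in Frobenius norms, so the quantity to control is the \emph{average} $\Tr\Delta_n=n^{-1}\sum_{j=1}^k\xi_{n,j}$ with $\xi_{n,j}=\left(\left(n^{-1}\sum_i(A_{ij}^n)^2\right)^{-1/2}-1\right)^2$, and a single Markov bound $\bP(\Tr\Delta_n>\epsilon)\leq (k/n)\,\E(\xi_{n,1})/\epsilon$ suffices. Your route can be repaired in the same spirit by writing $R_n-S_n=(D_n^{-1/2}-I)S_nD_n^{-1/2}+S_n(D_n^{-1/2}-I)$ and using $\|AB\|_F\leq\|A\|_F\|B\|_{\mathrm{op}}$, so that only $k^{-1}\sum_j\xi_{n,j}$ must vanish (the one-sided control $\min_jD_n(j,j)\geq 1/2$ needed for $\|D_n^{-1/2}\|_{\mathrm{op}}$ does survive a union bound, since lower deviations of averages of nonnegative iid variables are exponentially small). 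A secondary caveat: computing $\E[k^{-1}\Tr(S_n^2)]$ ``directly from the moments of the $A_{ij}^n$'' requires fourth moments, which are not assumed; the paper instead bounds $\|S_n\|_{\mathrm{op}}$ via $\rho(\Gamma_n)\,\rho(n^{-1}A_n^TA_n)$.
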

\begin{proof}
  The proof relies on Bai's formula (see \cite{bai93} and \cite{hacMPRF}) which
  provides the following bound on the Lévy distance:
\begin{equation}
  \label{eq:bai}
  {d}^4(F_{A^TA},F_{B^TB}) \leq \frac 2{n\,k}\,|A-B|^2(|A|^2+|B|^2)
\end{equation}
for any $n× k$ real matrices $A,B$, where $|A|=\sqrt{\Tr[A^TA]}$ denotes the
Frobenius norm of $A$.  We now use equation~(\ref{eq:bai}) with $A\gets
\Gamma_n^{1/2}\tilde A_n$ and $B\gets \frac 1{\sqrt
  n}\Gamma_n^{1/2}A_n$. Using~(\ref{eq:iid}) and introducing
$\Delta_n=(P_n^{-1/2}-\frac 1{\sqrt n}I_n)^2$, it is straightforward to show
that:
\begin{equation}
  \label{eq:baibis}
  {d}^4(F_{R_n},F_{S_n}) \leq \frac{2n}k\, \Tr [\Delta_nS_n]\ (\Tr [P_n^{-1}S_n] + \frac 1{n} \Tr S_n)\ .
\end{equation}
Note that $\frac 1n \Tr S_n \leq \frac {\rho(\Gamma_n)}{n^2} \Tr [A_n^TA_n]$,
where $\rho(\Gamma_n)$ denotes the spectral radius of $\Gamma_n$. Similarly,
$\Tr P_n^{-1}S_n = \Tr R_n\leq \rho(\Gamma_n)\, \frac 1 n \Tr \tilde
A_n^T\tilde A_n= \rho(\Gamma_n)\frac kn$.  Finally, $\Tr \Delta_nS_n\leq
\rho(\Gamma_n)\rho(\frac 1n A_n^TA_n)\Tr \Delta_n$. Putting all pieces together,
\begin{equation}
  {d}^4(F_{R_n},F_{S_n}) \leq  \kappa_n \Tr \Delta_n
\label{eq:boundD}
\end{equation}
where
$$
\kappa_n = \frac{2n}k\,\rho(\Gamma_n)^2\left(\frac kn+\frac 1{n^2} \Tr
  A_n^TA_n\right)\rho\left(\frac 1n A_n^TA_n\right)\ .
$$
From~\cite{GrSz}, $\rho(\Gamma_n)$ converges to $M_f = \sup(f)$.  By
the law of large numbers, $\frac 1{n^2} \Tr A_n^TA_n$ converges almost surely
(a.s.) to $c$.  From~\cite{yin88}, $\rho(\frac 1n A_n^TA_n)$ converges a.s. to
$(1+\sqrt{c})^2$.  Therefore, $\kappa_n$ converges a.s. to
$4M_f^2c(1+(1+\sqrt{c})^2)$.  In order to prove that ${d}(F_{R_n},F_{S_n})$
converges in probability to zero, it is sufficient, by
equation~(\ref{eq:boundD}), to prove that $\Tr\Delta_n$ converges in probability
to zero. We write $\Tr\Delta_n$ as follows:
$$
\Tr\Delta_n = \frac 1n \sum_{j=1}^k \xi_{n,j}
$$
where $\xi_{n,j} = \left(\left(\frac
    1n\sum_{i=1}^n(A_{ij}^n)^2\right)^{-1/2}-1\right)^2$.  Note that for a fixed
$n$, $\xi_{n,j}$ are iid for all $j$. Let $\epsilon>0$. By Markov inequality,
\begin{equation}
  \label{eq:cvprobaDelta}
  \bP\left(\Tr\Delta_n>\epsilon\right) \leq  \frac{\E\left(\sum_{j=1}^k \xi_{n,j}\right)}{n\epsilon}
  = \frac kn\, \frac{ \E\left(\xi_{n,1}\right)}{\epsilon}\ .
\end{equation}
As $\xi_{n,1}$ converges a.s. to zero, we conclude that
$\bP\left(\Tr\Delta_n>\epsilon\right)$ tends to zero.  This completes the proof
of Lemma~\ref{lem:levybai}.
\end{proof}

Thanks to Lemma~\ref{lem:levybai}, it is sufficient to study the asymptotic
behaviour of $F_{S_n}$. The latter is provided by~\cite{yin86,sil95}.  In order
to introduce this result, we need to recall some definitions.  For any
distribution function $F$, the Stieltjes transform $\mathrm{b}_F$ of $F$ is given
by:
$$
\mathrm{b}_F(z)=\int \frac{\rmd F(t)}{t-z}
$$
for each $z\in\C^+$, where $\C^+=\{z\in \C : \Im(z)>0\}$ with $\Im(z)$
denoting the imaginary part of~$z$.
Recall that, from the results
of~\cite{GrSz}, the spectral distribution function $F_{\Gamma_n}$ of $\Gamma_n$
converges weakly to $\Phi$ given by~(\ref{eq:cdf}).  By straightforward
application of the results of~\cite{yin86,sil95}, we obtain that, with
probability one, $F_{S_n}$ converges weakly to a deterministic measure $F$ whose
Stieltjes transform $\mathrm{b}=\mathrm{b}(z)$ is the unique solution in $\C^+$ of:
\begin{equation}
  \label{eq:jack}
  z=-\frac 1{\mathrm{b}} + \int\frac t{1+ct \mathrm{b}}\rm\rmd\Phi(t)\ ,
\end{equation}
for each $z\in\C^+$. The above result along with Lemma~\ref{lem:levybai} implies that
\begin{equation}
\label{eq:limFn}
\forall \epsilon>0,\ \bP({d}(F_{R_n},F)>\epsilon)\to 0\ .
\end{equation}
We are now in a position to study the limit of the LLR. We obtain immediately from~(\ref{eq:exprLLRseq1}):
\begin{equation}
  \label{eq:LLRdev}
    -\frac 1n\,\L_{\tilde A_n} = \frac k{2n}\left\{-\frac{\| \bs z\|^2}{k} + \beta_n + \gamma_n+ \delta_n\right\}
\end{equation}
where 
\begin{eqnarray*}
  \beta_n&=&\frac 1
  k \Tr \left(I_k+R_n\right)^{-1} = \int \frac 1{1+t}\,\rmd F_{R_n}(t)\\
  \gamma_n&=&\frac 1k\log\det\left(I_k+R_n\right) = \int \log\left(1+t\right)\rmd F_{R_n}(t) \\
  \delta_n&=&\frac 1k\bs z^T(I_k+R_n)^{-1}\bs z - \beta_n\ .
\end{eqnarray*}
Using~(\ref{eq:limFn}), $\beta_n$ and $\gamma_n$ respectively converge in probability
to the constants $\beta$ and $\gamma$ defined~by:
$$
\beta=\int \frac 1{1+t}\,\rmd F(t)\quad \textrm{ and }\quad \gamma=\int
\log\left(1+t\right)\rmd F(t)\ .
$$
Recalling that $\bs z\sim{\cal N}(0,I_k)$ under $H_0$, the term $\frac
1k \|\bs z\|^2$ in the rhs of~(\ref{eq:LLRdev}) converges
a.s. to one.  Since $\bs z$ is independent of $\tilde A_n$ and since the
spectral radius of $\left(R_n+I_k\right)^{-1}$ is bounded, it
is straightforward to show that $\delta_n\xrightarrow[]{a.s.}0$ (use for
instance Lemma~2.7 in~\cite{baiSil98}). 
Finally, $-(1/n)\,\L_{\tilde A_n}$ converges in probability to:
\begin{equation}
  \label{eq:LLRdevBis}
    K_\p({\cal A}) = \frac c{2}(-1 + \beta+ \gamma ) \ .
\end{equation}
Constant $\beta$ coincides with 
the Stieltjes transform of $F$ at point $-1$, that is
$\beta = \mathrm{b}(-1)$ where we defined for each $x<0$,
$\mathrm{b}(x)=\lim_{z\in\C^+\to x}\mathrm{b}(z)$. 
Constant $\beta$ is thus the unique solution
to~(\ref{eq:beta}).  A closed form expression for $\gamma$ can as well be obtained
using (for instance)~\cite{hacAAP}.  Using the fact that the limiting spectral
measure associated with $F$ has a bounded support, the dominated convergence
Theorem applies to the function $x\mapsto \int\log(x+t)dF(t)$. One easily obtains
after some algebra:
$$
\gamma = \int_{1}^\infty \left(\frac 1t - \mathrm{b}(-t)\right)\rmd t\ .
$$
Following~\cite{hacAAP}, we conclude that $\gamma = C(1)$ where $C$ is the
function defined for each $x>0$ by:
$$
C(x)=-1+x \mathrm{b}(-x)-\log(x\mathrm{b}(-x))+\frac 1c\int
\log(1+ct\mathrm{b}(-x))\rmd\Phi(t)\ .
$$
This statement can simply be proved by noting that $C'(t) = \frac 1t - \mathrm{b}(-t)$
(where $C'$ is the derivative of $C$) and $C(\infty)=0$.  Plugging the above
expression of $\gamma$ into~(\ref{eq:LLRdevBis}), we obtain the claimed
error exponent $K_{\mathrm{rnd}}$.

\subsection{Proof of Theorem~\ref{thm:optexp}}\label{sec:proof-theor-optexp}

We start with some useful definitions and technical preliminaries. 
Let $c\in(0,1)$. Denote  $m_f=\inf(f)$ $M_f=\sup(f)$ so that, by
definition of $\Phi$ in~(\ref{eq:cdf}), $\Phi(t) = 0$ for all $t< m_f$ and
$\Phi(t) = 1$ for all $t\geq M_f$. We  
define the set 
$$
\Lambda_c=\{\lambda\in(m_f,M_f]:\Phi(\lambda)\geq(1-c)\}\;.
$$ 
By assumptions {\bf A1} and {\bf A2}, $\Phi$ is continuously strictly
increasing from $[m_f,M_f]$ to $[0,1]$, we denote by $\Phi^{-1}$ its inverse
continuous function defined from $[0,1]$ to $[m_f,M_f]$. Hence
$\Lambda_c=[\Phi^{-1}(1-c), M_f]$. Moreover, using again {\bf A2}, we obtain
that $\1_{\Lambda_c}$ is almost surely continuous with respect to
$\Leb\circ f^{-1}$. By the uniform mapping theorem, this implies that, for any
sequence of probability measures $(\mu_n)$ weakly converging to $(2\pi)^{-1}\Leb\circ
f^{-1}$, we have, for all continuous function $g:\R\to\R$, as $n\to\infty$,  
\begin{equation}
  \label{eq:DeltaVSLambda}
\int_{\Lambda_c} g(\lambda) \rmd\mu_n(\lambda)\to
\frac1{2\pi}\int_{\Lambda_c} g(\lambda) \rmd \{\Leb\circ f^{-1}\}(\lambda)=  
\frac1{2\pi}\int_{\Delta_c}  g\circ f (\omega) \rmd \omega
\end{equation}
where the last equality follows from the definition of $\Delta_c$
in~(\ref{eq:Delta}) by setting $\lambda=f (\omega)$.

\subsubsection{The PCS case}

The outline of the proof is the following.

\begin{enumerate}[Step 1.]
\item Assume that
  \begin{equation}
    \label{eq:kcdefPCS}
k = \max\{i\in\{1,\dots,n\}\,:\, \Phi(\lambda_i^n)\geq 1-c\}\;,
  \end{equation}
where  $(\lambda_i^n)_{1\leq i\leq n}$ is given in
Definition~\ref{def:princ-comp-strat}. 
Then $k$  satisfies~(\ref{eq:growth-k-assump}) and strategy $\V$ 
has error exponent $K_{\mathrm{orth}}(c)$.\label{item:proof1}
\item Strategy $\V$ with any sequence $k$
  satisfying~(\ref{eq:growth-k-assump}) also has the error exponent
  $K_{\mathrm{orth}}(c)$. \label{item:proof2}
\item Under Condition~(\ref{eq:growth-k-assump}) strategy $\V$ is optimal
  among all orthogonal strategies, that is,~(\ref{eq:WF-err-exp-LB}) holds for
  any $\A$.\label{item:proof3}
\end{enumerate}

\noindent\textsl{Step \ref{item:proof1}}.
Let $\mu_n = \frac1n \sum_{i=1}^n \delta_{\lambda_i^n}$ denote the empirical
spectral measure of $\Gamma_n$ defined in~(\ref{eq:gammafctf}). Szegö's Theorem states that $\mu_n$ converges
weakly to $\frac1{2\pi}\Leb\circ f^{-1}$ (\cite{GrSz},
p.64). Applying~(\ref{eq:DeltaVSLambda}) and then Lemma \ref{lem:DeltaC} then gives 
\begin{equation*}
\frac1n \sum_{i=1}^n \1_{\Lambda_c}(\lambda_i^n)= 
\frac{1}{2\pi}\int_{\Delta_c}\rmd\omega = c\;.
\end{equation*}
That is, $k$ defined by~(\ref{eq:kcdefPCS}) satisfies~(\ref{eq:growth-k-assump}).
Recall that here $V_n$ is given in Definition~\ref{def:princ-comp-strat} with
$k$ given by in~(\ref{eq:kcdefPCS}). The empirical spectral measure of
$V_n^{T}\Gamma_nV_n+I_k$ is thus
given by $\frac1n \sum_{i=1}^n
\delta_{1+\lambda_i^n}\1_{\Lambda_c}(\lambda_i^n)$. Hence we have as above that
\begin{align}\label{eq:LimiteLL2}
&  \lim_{n\to\infty}\frac1n\log\det(V_n^{T}\Gamma_nV_n+I_k) =
  \frac1{2\pi}\int_{\Delta_c}\log(1 + f(\omega))\rmd\omega\;,\\
\label{eq:LimiteLL4}
&  \lim_{n\to\infty}\Tr\left[(V_n^{T}\Gamma_nV_n+ I_k)^{-1}\right]= 
  \frac1{2\pi}\int_{\Delta_c}\frac1{1 + f(\omega)}\rmd\omega\;.
\end{align}
The spectral radius $\rho[(V_n^{T}\Gamma_nV_n+I_k)^{-1}]$ is bounded
by $1/(1+ \Phi^{-1}(1-c))$. Using eqs~(\ref{eq:exprLLRseq1}),
(\ref{eq:LimiteLL2})--(\ref{eq:LimiteLL4}), Lemma \ref{lem:QuadGauss}
and~(\ref{eq:WF-err-exp}), we obtain that, under $H_0$, $-\frac1n\L_{V_n}\cp K_{\mathrm{orth}}(c)$.  
As a consequence of Lemma \ref{lem:Chen}, we obtain the assertion of Step~1.

\noindent\textsl{Step \ref{item:proof2}}. Observe that the error exponent associated to
a strategy $\V$ is increasing with $k$. Now let $k$ be a sequence
satisfying~(\ref{eq:growth-k-assump}). For any $c'$ and $c''$  such that
$c'<c<c''$,  define $k'$ and $k''$  by~(\ref{eq:kcdefPCS}) with $c$ replaced by
$c'$ and $c''$ respectively. Then, as seen in Step~\ref{item:proof1}, $k'$ and
$k''$ also satisfy~(\ref{eq:growth-k-assump}) with $c$ replaced by $c'$ and $c''$ respectively.
Thus, eventually, $k'\leq k\leq k''$, and, applying Step~\ref{item:proof1}, the
error exponent of $\V$ belongs to
$[K_{\mathrm{orth}}(c'),K_{\mathrm{orth}}(c'')]$. This, with the continuity of
$K_{\mathrm{orth}}$, yields the assertion of Step \ref{item:proof2}.  
      
\noindent\textsl{Step \ref{item:proof3}}. Assume $\A = (A_n)$ now denotes
\emph{any} orthogonal, that is $A_n$ is a $n× k$ orthogonal matrix for all
$n$, where $k$ satisfies~(\ref{eq:growth-k-assump}). Let us prove that the
bound~(\ref{eq:WF-err-exp-LB}) holds.  By Lemma~\ref{lem:Chen}, for all real
$t$, we have
\begin{equation}
  \label{eq:UBerrexp}
\liminf_{n\to\infty}\bP_0\left[-\frac1n\L_{A_n} >  t\right] = 0 \Rightarrow \overline K_{\p}(\A)\leq t\;.  
\end{equation}
Let $t>\limsup_{n\to\infty}\E_0\left[-\L_{A_n}/n\right]$.
Using Markov inequality, we have for $n$ large enough,
$$
\bP_0\left[-\frac1n\L_{A_n} > t\right] \leq
\frac{\Var_0\left[\frac1n\L_{A_n}\right]}{(t+\E_0\left[\frac1n\L_{A_n}\right])^2}\;.
$$
Using~(\ref{eq:exprLLRseq1}), we have 
$$
\Var_0\left[\frac1n \L_{A_n}\right]
=\frac14 \Var_0\left[{\bs z}^T \left\{I_k-(I_k+A_n^T\Gamma_nA_n)^{-1}\right\}{\bs z}\right]
\stackrel{n\to+\infty}{\to}0\;,
$$
where the convergence follows from Lemma \ref{lem:QuadGauss} by noticing that
$I_k-(I_k+A_n^T\Gamma_nA_n)$ has  eigenvalues in
$[0,1]$ and, under $H_0$ (recall that we set $\sigma^2=1$), ${\bs z}\sim \mathcal{N}(0,I_k)$.
The last two displays show that $\bP_0\left[-\frac1n\L_{A_n} > t\right]\to 0$ as
$n\to\infty$ for all
$t>\limsup_{n\to\infty}\E_0\left[-\L_{A_n}/n\right]$. With~(\ref{eq:UBerrexp}),
we
get that
$$
\overline K_{\p}(\A)\leq \limsup_{n\to\infty}\E_0\left[-\L_{A_n}/n\right] \;.
$$
To conclude the proof, it thus only remains to show that
$\limsup_{n\to\infty}\E_0\left[-\L_{A_n}/n\right]\leq K_{\mathrm{orth}}(c)$. 
We have, by~(\ref{eq:exprLLRseq1}),
$$
\E_0\left[-\L_{A_n}\right]=-k
+\log\det(A_n^T\Gamma_nA_n+I_k)+\Tr\left( (A_n^T\Gamma_nA_n+I_k)^{-1} \right)\;.
$$
Since $x\mapsto \log x + 1/x$ is nondecreasing on
$[1,+\infty[$, Lemma \ref{lem:HornJohnson} thus implies that
$\E_0\left[-\L_{A_n}\right]\leq\E_0\left[-\L_{V_n}\right]$. We proved in
Step~\ref{item:proof1}
that $\E_0\left[-\L_{V_n}/n\right]\to K_{\mathrm{orth}}(c)$. Hence the proof is
achieved. 

\subsubsection{The PFS case}

We now prove that the PFS strategy also achieves the error exponent
$K_{\mathrm{orth}}(c)$ under the condition~(\ref{eq:growth-k-assump}).
Using the same argument as in Step~\ref{item:proof2} of the PCS case, we can in
fact take $k$ as defined by
  \begin{equation}
    \label{eq:kcdefPFS}
k = \max\{i\in\{1,\dots,n\}\,:\, \Phi\circ f(2\pi j_{i-1}^n)\geq 1-c\}\;,
  \end{equation}
where $(j_i^n)_{0\leq i<n}$ is given in Definition~\ref{def:princ-freq-strat},
which we assume in the following.  

It is known (\cite{Gutierrez:Crespo:2008}, Lemma 4.6) that $\Gamma_n$ defined
in~(\ref{eq:gammafctf}) is
asymptotically equivalent to $F_n^TD_nF_n$ where $D_n$ denotes the $n× n$
diagonal matrix with entries $f(2\pi k/n)$, $k=0,\dots,n-1$. As in
\cite{Gray06}, we denote asymptotic equivalence between matrices $A_n$ and $B_n$
by $A_n\sim B_n$. Asymptotic equivalence is preserved by elementary matrix
operations (\cite{Gray06}, Proposition 2.1). Hence, $F_n$ being unitary,
$$D_n\sim F_n\Gamma_nF_n^T\;.$$

Also, from the definition of $W_n$,
$$W_n = F_n^TS_n\;,$$
where $S_n$ is a $n× k$ selection matrix the columns of which belong to the
canonical basis.  Hence, $S_n$ being unitary,
\begin{equation}\label{eq:ToepCircEq}
S_n^TD_nS_n\sim W_n^T\Gamma_nW_n\;.
\end{equation}

Eq (\ref{eq:ToepCircEq}) implies
\begin{eqnarray*}
  \lim_{n\to\infty}\frac1n\log\det(W_n^T\Gamma_nW_n+I_k) &=&
  \lim_{n\to\infty}\frac1n\sum_{\PF^n(f;c)}\log(1+f(2\pi k/n))\\
  & = & \frac1{2\pi}\int_{\Delta_c}\log(1+f(\omega))\rmd\omega\;,
\end{eqnarray*}
And
\begin{eqnarray*}
  \lim_{n\to\infty}\Tr\left[(W_n^T\Gamma_nW_n+I_k)^{-1}\right] &=& 
  \lim_{n\to\infty}\frac1n\sum_{\PF^n(f;c)}\frac{1}{1+f(2\pi k/n)}\\
  & = &
  \frac1{2\pi}\int_{\Delta_c}\frac1{1 + f(\omega)}\rmd\omega\;.
\end{eqnarray*}

We then conclude as in Step~\ref{item:proof1} of the PCS case.

\subsection{Proof of
  Theorem~\ref{thm:simpleStat}}\label{sec:proof-theor-statSimple}

Again we can take $k$ defined by~(\ref{eq:kcdefPCS}) without loss of
generality. 

Let $D_n$ denote the $n× n$ diagonal matrix with entries $D_n(\ell,\ell)=
\tilde{f}(2\pi \ell/n)$, $\ell=0,\dots,n-1$, where we defined the function
$$
\tilde{f}(\omega)=
\left(\frac1{1+\sigma^2}-\frac1{1+\sigma^2+f(\omega)}\right)\1(\Phi\circ
f(\omega)\geq 1-c)\;.
$$
Then by Definition~\ref{def:princ-freq-strat} and~(\ref{eq:simpleStat}), we have
$$
\mathcal{T}_n\stackrel{d}{=}{\sf u}_n^T M_n {\sf u}_n \;,
$$
where $M_n=F_n^T D_n F_n$ and ${\sf u}_n$ is a $n$-sample of a centered
stationary Gaussian process with spectral density
$g_1(\omega)=1+\sigma^2+f(\omega)$ under $H_1$ and $g_0(\omega)=1+\sigma^2$
under $H_0$. We shall apply \cite[Propostion~2]{bercu:gamboa:Rouault:1997} which
provides a large deviation principle (LDP) for quadratic forms of stationary
Gaussian processes. Recall that we denote by $T_n(g)$ the $n× n$ covariance
matrix associated to the spectral density $g$ (see~(\ref{eq:gammafctf})). Let
$S_n=\mathrm{Sp}(T_n(g)^{1/2}M_nT_n(g)^{1/2})$ denote the set of eigenvalues of
$T_n(g)^{1/2}M_nT_n(g)^{1/2}$. Since $M_n$ is non-negative, to apply this
result, we successively show that for $g=g_0$ or $g=g_1$,
\begin{enumerate}[(i)]
\item $\bar{a}_n=\max(S_n)$ is bounded above by $M_{\tilde{f}}M_{g}$,
\item the following weak convergence holds $n^{-1}\sum_{\lambda\in
    S_n}\delta_\lambda \Rightarrow \frac1{2\pi}\Leb\circ[\tilde{f}g]^{-1}$,
\item $\bar{a}_n\to M_{\tilde{f}g}$ as $n\to\infty$. 
\end{enumerate}
Observe that the eigenvalues of $D_n$ are given by $\tilde{f}(2\pi \ell/n)$,
with $\ell=0,\dots,n-1$, and those of $T_n(g)$ are bounded by $M_g$. Hence we
have (i). Assertion (ii) is a consequence of Lemma~5 in
\cite{Gutierrez:Crespo:2008} and Theorem~2.1 in\cite{Gray06}. By (i) and (ii),
we have
$$
\limsup \bar{a}_n\leq M_{\tilde{f}}M_{g}\quad\text{and}\quad
M_{\tilde{f}g}\leq \liminf \bar{a}_n \;.
$$
Thus Assertion (iii) follows by observing that $f$, $g_0$ and $g_1$ achieve
their maxima at the same points, thus $M_{\tilde{f}g}= M_{\tilde{f}}M_{g}$ for
$g=g_0$ or $g_1$. Since Assertions (i)--(iii) hold, Propostion~3 and Corollary~2
in \cite{bercu:gamboa:Rouault:1997} give that for $i=0,1$, under $H_i$,
$n^{-1}\mathcal{T}_n$ satisfies a LDP with good rate function
\begin{equation}
  \label{eq:rateFunc}
  I_i(x)=\sup_{y\in\R}\left(yx+\frac1{4\pi}\int_{-\pi}^\pi \log(1-2 y[\tilde{f}g_i](\omega))\,\rmd\omega\right) \;. 
\end{equation}
with $g=g_0$ or $g=g_1$ under $H_0$ or $H_1$, respectively. As in
\cite{bercu:gamboa:Rouault:1997}, we assume for convenience that
$\log(x)=-\infty$ when $x\leq 0$.

Assertion (ii) above also implies that
$n^{-1}\mathcal{T}_n\stackrel{P}{\to}\frac{1}{2\pi}\int_{-\pi}^\pi[\tilde{f}g](\omega)\rmd\omega$
with the same convention for $g$. Hence the sequence $(\eta_n(\p))$ in
Theorem~\ref{thm:simpleStat} satisfies $n^{-1}\eta_n(\p)\to
x_0:=\frac{1}{2\pi}\int_{-\pi}^\pi[\tilde{f}g_0](\omega)\rmd\omega$. Thus the
LDP under $H_1$ gives
$$
\limsup_{n\to\infty} \frac1n\log\bP_1(\mathcal{T}_n\leq\eta_n(\p)) \leq
\inf_{c>x_0} \limsup_{n\to\infty}\frac1n\log\bP_1(n^{-1}\mathcal{T}_n\leq c)
\leq\inf_{c>x_0}\sup_{x\leq c} -I_1(x) \;,
$$
and
$$
\liminf_{n\to\infty} \frac1n\log\bP_1(\mathcal{T}_n\leq\eta_n(\p)) \geq
\sup_{c<x_0}\liminf_{n\to\infty} \frac1n\log\bP_1(n^{-1}\mathcal{T}_n< c)
\geq\sup_{c<x_0}\sup_{x<c} -I_1(x) \;,
$$
By Lemma~\ref{lem:contRateFunc}, we conclude that
$$
\lim_{n\to\infty} -\frac1n\log\bP_1(\mathcal{T}_n\leq\eta_n(\p))= \inf_{x\leq
  x_0} I_1(x)\;.
$$
To conclude the proof, it only remains to show that $I_1(x_0)=K_\p(\W^c)$ and
$I_1(x)$ is nonincreasing on $(-\infty, x_0]$.  By definition of $\tilde{f}$
and $I_1$, we have $I_1(x)=\sup_{y}F_x(y)$, where
$$
F_x(y)=yx+\frac1{4\pi}\int_{\Delta_c} \log(1-2 y[g_1/g_0-1](\omega))\,\rmd\omega\;.
$$
Using the definition of $x_0$, we further have $yx_0=\int_{\Delta_c}
y[1-g_0/g_1](\omega)\rmd\omega$ and hence $F_{x_0}(y)=\frac1{4\pi}\int_{\Delta_c}
f_\omega(2y)\,\rmd\omega$ with
$$
f_\omega(y)= y [1-g_0/g_1](\omega)+\log(1- y[g_1/g_0-1](\omega))\;.
$$
For any $\omega$, it is straightforward to show that $f_\omega(y)$ is maximized at
$y=-1$ at which it takes value
$f_\omega(-1)=2D\left(\N(0,g_0(\omega))\,||\,\N(0,g_1(\omega)\right)$. Since the
maximizing $f_\omega(y)$ does not depend on $\omega$ we obtain
\begin{align*}
I_1(x_0)&=\sup_{y\in\R}\frac1{4\pi}\int_{\Delta_c} f_\omega(2y)\,\rmd\omega\\
&=\frac1{4\pi}\int_{\Delta_c}\sup_{y\in\R} f_\omega(2y)\,\rmd\omega
=K_\p(\W^c)\,.
\end{align*}
We now consider $x\leq x_0$. By differentiating $F_x$, the $y$ maximizing
$F_x(y)$ satisfies
$$
x=\int_{\Delta_c}\frac{[g_1/g_0-1](\omega)}{1-2y[g_1/g_0-1](\omega)}\,\rmd\omega\;.
$$
Note that $g_1/g_0-1$ is non-negative and has a positive integral on $\Delta_c$
hence the right-hand side of the previous display has a strictly positive
derivative w.r.t. $y$. It follows that $y(x)$, defined as the $y$ maximizing
$F_x(y)$, is strictly increasing with $x$. On the other hand, we know from above
that $y(x_0)=-1/2$. Thus, for all $x\leq x_0$, we have
$I_1(x)=\sup_{y\leq-1/2}F_x(y)$. Now observe that for all $x'\leq x$ and
all $y\leq0$ we have $F_{x'}(y)-F_x(y)=(x'-x)y\geq0$. It follows that $I_1$ is
nonincreasing on $(-\infty,x_0]$, which achieves the proof.

\section{Technical lemmas}

\begin{lemma}\label{lem:QuadGauss}
  Assume that for each $n>0$, $x_n\sim\N(0,\Sigma_n)$ where $\Sigma_n$ has
  bounded spectral radius $\rho(\Sigma_n)$; and assume $Q_n$ is a family of
  quadratic forms with bounded spectral radius $\rho(Q_n)$. Then,
  $$\Var[\frac1n x_n^TQ_nx_n] \stackrel{n\to+\infty}{\to}0\;.$$
  If, moreover,
  $$
  \lim_{n\to\infty}\frac1n \Tr(Q_n\Sigma_n)\to c\;,
  $$
  Then $\frac1n(x_n^TQ_nx_n)$ converges in the $L_2$ sense towards $c$.
\end{lemma}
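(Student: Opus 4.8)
\bproof (proposed plan). The plan is to reduce the statement to the classical first- and second-moment formulas for quadratic forms of a centered Gaussian vector. Since $x_n^T Q_n x_n = x_n^T \tilde Q_n x_n$ with $\tilde Q_n = (Q_n+Q_n^T)/2$ symmetric and $\rho(\tilde Q_n)\le\rho(Q_n)$, I may assume $Q_n$ symmetric without loss of generality, and I may write $x_n=\Sigma_n^{1/2}g_n$ with $g_n\sim\N(0,I_n)$ (here $\Sigma_n^{1/2}$ exists because $\Sigma_n$ is a covariance matrix, hence positive semi-definite). Applying $\E[g_n^T B g_n]=\Tr(B)$ and $\Var[g_n^T B g_n]=2\Tr(B^2)$ for symmetric $B$ (the latter being immediate after diagonalising $B$, or via Wick's formula) with $B=\Sigma_n^{1/2}Q_n\Sigma_n^{1/2}$, I get the identities
$$
\E\!\left[x_n^T Q_n x_n\right]=\Tr(Q_n\Sigma_n)\;,\qquad \Var\!\left[x_n^T Q_n x_n\right]=2\,\Tr\!\big((Q_n\Sigma_n)^2\big)\;.
$$

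First step: bounding the variance. From the identity above, $\Var[\frac1n x_n^T Q_n x_n]=\frac{2}{n^2}\Tr((Q_n\Sigma_n)^2)$. The matrix $Q_n\Sigma_n$ is similar to the symmetric matrix $\Sigma_n^{1/2}Q_n\Sigma_n^{1/2}$, whose spectral norm is at most $\rho(\Sigma_n)\rho(Q_n)$; hence every eigenvalue of $(Q_n\Sigma_n)^2$ is bounded by $\rho(\Sigma_n)^2\rho(Q_n)^2$ and $\Tr((Q_n\Sigma_n)^2)\le n\,\rho(\Sigma_n)^2\rho(Q_n)^2$. This yields
$$
\Var\!\left[\frac1n x_n^T Q_n x_n\right]\;\le\;\frac{2}{n}\,\rho(\Sigma_n)^2\,\rho(Q_n)^2\;,
$$
which tends to $0$ as $n\to\infty$ since $\rho(\Sigma_n)$ and $\rho(Q_n)$ are bounded by hypothesis. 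This proves the first assertion.

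Second step: the $L_2$ convergence. Under the extra assumption $\frac1n\Tr(Q_n\Sigma_n)\to c$, the first identity gives $\E[\frac1n x_n^T Q_n x_n]\to c$, and the bias--variance decomposition
$$
\E\!\left[\left|\frac1n x_n^T Q_n x_n - c\right|^2\right]=\Var\!\left[\frac1n x_n^T Q_n x_n\right]+\left|\frac1n\Tr(Q_n\Sigma_n)-c\right|^2
$$
shows that both terms on the right-hand side vanish, which is exactly $L_2$ convergence of $\frac1n x_n^T Q_n x_n$ to $c$. I do not expect a genuine obstacle here: the only points deserving a line of care are the justification of the variance formula for Gaussian quadratic forms and the submultiplicative bound $\rho(Q_n\Sigma_n)\le\rho(Q_n)\rho(\Sigma_n)$, which rests on the positive semi-definiteness of $\Sigma_n$.
\eproof
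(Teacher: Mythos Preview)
Your proof is correct and follows essentially the same route as the paper: write $x_n=\Sigma_n^{1/2}g_n$ with $g_n$ standard Gaussian, use the variance identity $\Var[g_n^TBg_n]=2\Tr(B^2)$ for $B=\Sigma_n^{1/2}Q_n\Sigma_n^{1/2}$, and bound $\Tr(B^2)\le n\,\rho(\Sigma_n)^2\rho(Q_n)^2$. Your version is a bit more explicit (the symmetrisation of $Q_n$ and the bias--variance decomposition for the $L_2$ part), but there is no substantive difference.
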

\begin{IEEEproof}
  One has $\E[\x_n^TQ_n\x_n] = \Tr[Q_n\Sigma_n]$. Let us estimate
  $\Var[\x_n^TQ_n\x_n]$.
  $$\x_n^TQ_n\x_n = y_n^T\Delta_ny_n$$
  with $y_n$ a standard centered gaussian vector and $\Delta_n$ diagonal and
  congruent to $\Sigma_n^{\frac12}Q_n\Sigma_n^{\frac12}$.
  $$\Var[\x_n^TQ_n\x_n] = 2\Tr[\Delta_n^2] \leq 2n\rho(\Delta_n^2)\leq
  2n\rho(\Sigma_n)^2\rho(Q_n)^2\leq C\cdot n$$
  where $C$ is a constant.
  Thus we have, as sought,
  $$\Var[\frac1n\x_n^TQ_n\x_n]\to_{n\to\infty}0\;.$$
\end{IEEEproof}

\begin{lemma}[\cite{HoJo}, p. 189]\label{lem:HornJohnson}
  Let $Q$ be a symmetric $n× n$ matrix, and $V$ be a $r$-dimensional subspace of
  $\R^n$.  Denote by $Q_V$ the restriction of $Q$ to $V$, $\lambda_i$,
  $i\in\{1,\dots,n\}$, the eigenvalues of $Q$ in increasing order and $\mu_j$,
  $j\in\{1,\dots,r\}$, the eigenvalues of $Q_V$ in increasing order.  Then, for
  all $i=1,\dots r$, we have $\lambda_i\leq\mu_i \leq\lambda_{n+i-r}$.
\end{lemma}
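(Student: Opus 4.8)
The plan is to obtain both inequalities from the Courant--Fischer min-max characterization of eigenvalues, applied once to $Q$ on $\R^n$ and once to $Q_V$ on the $r$-dimensional inner-product space $V$. Recall that for a symmetric $n\times n$ matrix $Q$ with eigenvalues $\lambda_1\leq\dots\leq\lambda_n$ one has, for each index $\ell\in\{1,\dots,n\}$,
$$
\lambda_\ell=\min_{\dim S=\ell}\ \max_{x\in S,\,x\neq0}\frac{x^TQx}{\|x\|^2}
=\max_{\dim S=n-\ell+1}\ \min_{x\in S,\,x\neq0}\frac{x^TQx}{\|x\|^2}\;,
$$
where $S$ ranges over linear subspaces of $\R^n$; the same two formulas hold for $Q_V$ with $S$ restricted to range over subspaces of $V$, once one notes that, by the very definition of the restriction $Q_V$, the Rayleigh quotient $x\mapsto x^TQx/\|x\|^2$ evaluated at $x\in V$ coincides with that of $Q_V$ at the same vector (equivalently, if $U$ is $n\times r$ with orthonormal columns spanning $V$, then $Q_V=U^TQU$ and $x=U\xi$ gives $x^TQx=\xi^T(U^TQU)\xi$, $\|x\|^2=\|\xi\|^2$).

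For the lower bound $\lambda_i\leq\mu_i$ I would use the outer-min form of $\mu_i$: if $S^\star\subseteq V$ is an $i$-dimensional subspace attaining the minimum, then $\mu_i=\max_{x\in S^\star,\,x\neq0}x^TQx/\|x\|^2$, and since $S^\star$ is also an $i$-dimensional subspace of $\R^n$ this maximum is $\geq\lambda_i$ by the outer-min formula for $\lambda_i$; hence $\mu_i\geq\lambda_i$. For the upper bound $\mu_i\leq\lambda_{n+i-r}$ I would use the outer-max form of $\mu_i$: if $T^\star\subseteq V$ is an $(r-i+1)$-dimensional subspace attaining the maximum, then $\mu_i=\min_{x\in T^\star,\,x\neq0}x^TQx/\|x\|^2$, and writing $r-i+1=n-(n+i-r)+1$ and viewing $T^\star$ as a subspace of $\R^n$ of that dimension, the outer-max formula for $\lambda_{n+i-r}$ gives $\min_{x\in T^\star,\,x\neq0}x^TQx/\|x\|^2\leq\lambda_{n+i-r}$; hence $\mu_i\leq\lambda_{n+i-r}$. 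Combining the two bounds gives the claim for all $i=1,\dots,r$, the indices $n+i-r$ lying in $\{1,\dots,n\}$ because $1\leq i\leq r\leq n$.

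I do not expect a genuine obstacle: this is the Poincar\'e separation theorem (of which Cauchy interlacing is the case $r=n-1$), and once the min-max machinery is in place the argument is immediate. The single point that deserves care is the bookkeeping of dimensions --- matching the subspace dimension used in the formula for $\mu_i$ with the correct index in the formula for the eigenvalues of $Q$ (an $\ell$-dimensional subspace pairs with $\lambda_\ell$ in the outer-min form and with $\lambda_{n-\ell+1}$ in the outer-max form), together with the observation that a subspace of $V$ is also a subspace of $\R^n$ of the same dimension carrying the same quadratic form. An alternative, Courant--Fischer-free route would be an eigenvalue-counting or continuous-deformation argument, but the min-max proof above is the shortest and is the one I would carry out.
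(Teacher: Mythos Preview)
Your argument is correct: this is exactly the standard Courant--Fischer proof of the Poincar\'e separation theorem, and the dimension bookkeeping you carry out (matching an $i$-dimensional subspace of $V$ with the outer-min formula for $\lambda_i$, and an $(r-i+1)$-dimensional subspace with the outer-max formula for $\lambda_{n+i-r}$) is accurate. Note, however, that the paper does not supply its own proof of this lemma at all---it is quoted as a known result with a page reference to Horn and Johnson---so there is no in-paper argument to compare against; your proof is essentially the one given in that reference.
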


\begin{lemma}\label{lem:DeltaC}
  Under {\bf A1-A2}, we have for any $c\in[0,1]$, $\Leb(\Delta_c)=2\pi c$, where
  $\Delta_c$ is defined by~(\ref{eq:Delta}).
\end{lemma}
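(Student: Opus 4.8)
The plan is to reduce the identity to a short computation with the image measure $\Leb\circ f^{-1}$, exploiting that Assumption~{\bf A2} forces the distribution function $\Phi$ in~(\ref{eq:cdf}) to be \emph{continuous}. First I would record the elementary properties of $\Phi$: it is nondecreasing; $\Phi(t)\to0$ as $t\to-\infty$ and $\Phi(t)\to1$ as $t\to+\infty$ by continuity of the finite measure $\Leb\circ f^{-1}$, whose total mass is $\Leb([-\pi,\pi])=2\pi$; it is right-continuous and, since $\{\Leb\circ f^{-1}\}(\{t\})=0$ for every $t$ by {\bf A2}, also left-continuous, hence continuous on $\R$; finally, $f$ being bounded by {\bf A1}, $\Phi(t)=1$ for all $t\geq M_f:=\sup f$. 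The case $c=1$ is immediate, since $\Phi\geq0$ gives $\Delta_1=(-\pi,\pi)$ and hence $\Leb(\Delta_1)=2\pi$.

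For $c\in[0,1)$ put $u=1-c\in(0,1]$ and $q=\inf\{t\in\R:\Phi(t)\geq u\}$. I would then check that $q$ is finite ($[M_f,+\infty)$ is contained in the set, and $\Phi(t)\to0<u$ makes the set bounded below) and that $\{t\in\R:\Phi(t)\geq u\}=[q,+\infty)$ with $\Phi(q)=u$: from $\Phi(t)<u$ for all $t<q$ together with continuity, $\Phi(q)=\Phi(q^-)\leq u$; from $\Phi(t)\geq u$ for all $t>q$ (monotonicity plus the definition of the infimum) together with continuity, $\Phi(q)\geq u$. Hence $\Delta_c=f^{-1}([q,+\infty))\cap(-\pi,\pi)$, which has the same Lebesgue measure as $f^{-1}([q,+\infty))$, so that
$$
\Leb(\Delta_c)=\{\Leb\circ f^{-1}\}([q,+\infty))=2\pi-\{\Leb\circ f^{-1}\}((-\infty,q))=2\pi(1-\Phi(q^-))=2\pi(1-u)=2\pi c\,,
$$
where the third equality uses continuity from below of the measure $\Leb\circ f^{-1}$ and the fourth uses $\Phi(q^-)=\Phi(q)=u$.

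I do not anticipate a genuine obstacle: the whole content is the ``probability integral transform'' statement that $\Phi\circ f$ is uniformly distributed on $[0,1]$ under the probability measure $\Leb/(2\pi)$. The only point requiring a little care is the elementary verification that continuity of $\Phi$ pins down $\Phi(q)=u$ exactly (neither overshooting nor undershooting the level $u$), which is precisely what disposes of any flat stretches that $\Phi$ may still possess under {\bf A2}. Assumption~{\bf A1} enters only to guarantee that $f$ is Borel measurable and bounded, so that $f^{-1}([q,+\infty))$ is measurable and $\Leb\circ f^{-1}$ is a well-defined finite measure on $\R$.
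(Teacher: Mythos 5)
Your proof is correct and follows essentially the same route as the paper's: both identify $\Delta_c$ (up to a null set) with $f^{-1}([q,+\infty))$ for the appropriate quantile $q$ of $\Phi$ and then read the measure off the distribution function, i.e.\ the probability integral transform. The only difference is that you use the generalized inverse and need just continuity of $\Phi$ (from {\bf A2}), whereas the paper invokes the strict monotonicity of $\Phi$ on $[m_f,M_f]$ established in the preamble of its appendix to work with a genuine inverse $\Phi^{-1}$; the two are interchangeable here.
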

\begin{IEEEproof}
  We have $\Delta_c= (\Phi\circ f)^{-1}([1-c,\infty))\cap(-\pi,\pi)$, where
  $(\Phi\circ f)^{-1}$ denotes the inverse image under $\Phi\circ f$.  Observe
  that $(\Phi\circ f)^{-1}=f^{-1}\circ \Phi^{-1}$. Moreover as we have seen in
  the preamble of Appendix~\ref{sec:proof-theor-optexp}, $\Phi$ is continuously
  and strictly increasing from $[m_f,M_f]$ to $[0,1]$ and constant on $[M_f,\infty)$, hence 
  $\Phi^{-1}([1-c,\infty))=[\Phi^{-1}(1-c),\infty)$, where $\Phi^{-1}$ here
  denotes the inverse function from $[0,1]$ to $[0,M_f]$. Hence
  $\Delta_c=f^{-1}([\Phi^{-1}(1-c),\infty)$. Now since $\Phi$ is the
  distribution function of the probability measure $(2\pi)^{-1}\Leb\circ f^{-1}$
  and using again that it is continuously and strictly increasing  from
  $[0,M_f]$ to $[0,1]$, we get that $(2\pi)^{-1}\Leb(\Delta_c)=1-(1-c)=c$, which
  concludes the proof.
\end{IEEEproof}

\begin{lemma}\label{lem:contRateFunc}
  Let $I(x)$ be defined for $x\in\R$ by~(\ref{eq:rateFunc}) with values in
  $\R\cup\{\infty\}$ for some non-negative bounded function
  $h=[\tilde{f}g]$. Then $I(x)$ is finite and continuous for $x>0$.
\end{lemma}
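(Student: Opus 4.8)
\textsl{Proof proposal.}
The plan is to recognize $I$ as a partial Legendre--Fenchel transform. Writing
$$\Lambda(y)=\frac1{4\pi}\int_{-\pi}^{\pi}\log\bigl(1-2y\,h(\omega)\bigr)\,\rmd\omega\qquad(y\in\R),$$
definition~(\ref{eq:rateFunc}) reads $I(x)=\sup_{y\in\R}\bigl(xy+\Lambda(y)\bigr)$, a pointwise supremum of affine functions of $x$. Hence $I$ is automatically convex and lower semicontinuous on $\R$, and, since a convex function that is finite on an open interval is automatically continuous (indeed locally Lipschitz) there, it suffices to prove that $I(x)$ is finite for every $x>0$. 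The trivial lower bound $I(x)\ge\Lambda(0)=0$ (take $y=0$) already gives $I>-\infty$, so the only real work is the upper bound. I would also note at the outset that $h$ is not Lebesgue-a.e.\ zero --- in the application $h=\tilde f g$, with $\tilde f$ a.e.\ positive on $\Delta_c$, a set of measure $2\pi c>0$ by Lemma~\ref{lem:DeltaC} --- since otherwise $\Lambda\equiv0$ and $I(x)=+\infty$ for every $x\ne0$, making the statement vacuous.

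For the finiteness of $I(x)$, $x>0$, let $M_h=\mathrm{ess\,sup}\,h\in(0,\infty)$ ($<\infty$ because $h$ is bounded, $>0$ because $h\not\equiv0$). With the convention $\log t=-\infty$ for $t\le0$, the integrand in $\Lambda(y)$ equals $-\infty$ on a positive-measure set as soon as $y>1/(2M_h)$, so $\Lambda(y)=-\infty$ there and the supremum defining $I(x)$ is effectively taken over $y\le 1/(2M_h)$. On the range $0\le y\le 1/(2M_h)$ one has $1-2y\,h(\omega)\le1$, hence $\Lambda(y)\le0$ and $xy+\Lambda(y)\le xy\le x/(2M_h)$. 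On the range $y<0$, boundedness of $h$ gives $1-2y\,h(\omega)\le1+2|y|M_h$, hence
$$xy+\Lambda(y)\le -x|y|+\tfrac12\log\bigl(1+2|y|M_h\bigr),$$
and the right-hand side is continuous in $|y|$ and tends to $-\infty$ as $|y|\to\infty$ (the linear term dominates the logarithm), so it is bounded above by a finite constant $C_x$. Combining the two ranges, $I(x)\le\max\{x/(2M_h),\,C_x\}<\infty$.

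This proves $I$ finite on $(0,\infty)$, and together with the convexity observation of the first paragraph, $I$ is continuous on $(0,\infty)$, which is the claim. The only genuinely delicate step --- and the one I would flag as the main obstacle --- is the upper bound above: it hinges on $\Lambda(y)=o(|y|)$ as $y\to-\infty$, so that the linear term $xy$ with $x>0$ eventually wins, and this is precisely where boundedness of $h$ is used; the remaining steps are soft. (The same two bounds in fact show $I(x)=+\infty$ for $x\le0$, so that $\{I<\infty\}=(0,\infty)$ exactly, but this refinement is not needed.)
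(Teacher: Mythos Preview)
Your proof is correct. The essential bounds you use --- that $\Lambda(y)=-\infty$ for $y>1/(2M_h)$ and that $xy+\Lambda(y)\to-\infty$ as $y\to-\infty$ when $x>0$ --- are exactly those of the paper. The difference lies in how continuity is extracted. You observe that $I(x)=\sup_y(xy+\Lambda(y))$ is a pointwise supremum of affine functions of $x$, hence convex, and then invoke the standard fact that a convex function finite on an open interval is continuous there; this reduces the whole lemma to the finiteness bound. The paper instead argues directly: for each $\epsilon>0$ it shows that the supremum defining $I(x)$ is, uniformly for $x\geq\epsilon$, attained on a fixed compact interval $[y_\epsilon,1/(2M_h)]$, and then uses $|J_x(y)-J_{x'}(y)|=|y|\,|x-x'|$ on that interval to obtain an explicit local Lipschitz constant. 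Your convexity shortcut is slightly cleaner; the paper's route yields an explicit Lipschitz bound but requires the uniform-in-$x$ choice of $y_\epsilon$. Your remark that $h$ must not vanish a.e.\ (otherwise $I\equiv+\infty$ on $(0,\infty)$) is a valid caveat, implicit in both arguments through the appearance of $1/(2M_h)$.
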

\begin{IEEEproof}
  Let $J_x(y)=yx+\frac1{4\pi}\int_{-\pi}^\pi
  \log(1-2y[\tilde{f}g](\omega))\,\rmd\omega$ so that $I(x)=\sup_yJ_x(y)$. Let
  $M_h$ denote the essential sup of $h$. Then $J_x(y)=-\infty$ for all
  $y>1/(2M_h)$.  Let $\epsilon>0$. Note that $J_x(0)=0$ and for all
  $x\geq\epsilon$ and $y\leq0$, $J_x(y)\leq yx+\log(1-2 y M_h)/2\to-\infty$ as
  $y\to-\infty$. Thus there exists $y_\epsilon$ only depending on $\epsilon$
  such that $J_x(y)\leq 0$ for all $x\geq\epsilon$ and $y\leq y_\epsilon$. From
  these facts, it follows that for all $x\geq\epsilon$,
  $I(x)=\sup_{y\in[y_\epsilon,1/(2M_h)]}|J_x(y)|$. Finally we observe that for
  all $x,x'\geq\epsilon$,
  $\sup_{y\in[y_\epsilon,1/(2M_h)]}|J_x(y)-J_{x'}(y)|\leq
  (-y_\epsilon\lor1/(2M_h))\,|x-x'|$ which now yields the result.
\end{IEEEproof}

{\small 
\bibliographystyle{IEEEbib}
\bibliography{Biblio}}
\end{document}